\documentclass[11pt,twoside]{article}

\usepackage{a4}

\usepackage{amssymb,amsmath,amsthm,latexsym}
\usepackage{amsfonts}
  \usepackage{amsfonts}
\usepackage{graphicx}

\usepackage{mathtools,amsthm}
\newtheoremstyle{case}{}{}{}{}{}{:}{ }{}

\usepackage{hyperref}
\usepackage{amsmath, amsfonts}
\usepackage{amssymb, graphicx}
\usepackage{amscd}
\usepackage{textcomp}
\usepackage{palatino}
\usepackage{mathtools}
\usepackage{enumitem}
\usepackage[dvipsnames]{xcolor}

\newtheorem{theorem}{Theorem}[section]

\newtheorem{definition}[theorem]{Definition}
\newtheorem{example}[theorem]{Example}
\newtheorem{lemma} [theorem]{Lemma}

\newtheorem{remark}[theorem]{Remark}

\numberwithin{subcase}{case}

\voffset=-12mm
\mathsurround=2pt
\parindent=12pt
\parskip= 4.5 pt
\lineskip=3pt
\oddsidemargin=10mm
\evensidemargin=10mm
\topmargin=55pt
\headheight=12pt
\footskip=30pt
\textheight 8.1in
\textwidth=150mm
\raggedbottom
\pagestyle{myheadings}
\hbadness = 10000
\tolerance = 10000

\vspace{5cm}

  \begin{document}
  
  \label{'ubf'}  
\setcounter{page}{1}                                 
\markboth {\hspace*{-9mm} \centerline{\footnotesize \sc
Semi-Involutory MDS matrices }
               }
              { \centerline                           {\footnotesize \sc  
 T. Chatterjee and A. Laha
 } \hspace*{-9mm}              
    }

\vspace*{-2cm}

\begin{center}
{ 
       { \textbf {  A Characterization of Semi-Involutory MDS Matrices
                               }
       }
\\

\medskip
{\sc Tapas Chatterjee }\\
{\footnotesize Indian Institute of Technology Ropar, Punjab, India.
}\\
{\footnotesize e-mail: {\it tapasc@iitrpr.ac.in}}
\medskip

{\sc Ayantika Laha }\\
{\footnotesize Indian Institute of Technology Ropar, Punjab, India.
}\\
{\footnotesize e-mail: {\it 2018maz0008@iitrpr.ac.in}}
\medskip
}
\end{center}

\thispagestyle{empty} 
\vspace{-.4cm}

\hrulefill

\begin{abstract}  
In symmetric cryptography, maximum distance separable (MDS) matrices with computationally simple inverses have wide applications. Many block ciphers like AES, SQUARE, SHARK, and hash functions like PHOTON use an MDS matrix in the diffusion layer. In this article, we first characterize all $3 \times 3$ irreducible semi-involutory matrices over the finite field of characteristic $2$. Using this matrix characterization, we provide a necessary and sufficient condition to construct $3 \times 3$ MDS semi-involutory matrices using only their diagonal entries and the entries of an associated diagonal matrix. Finally, we count the number of $3 \times 3$ semi-involutory MDS  matrices over any finite field of characteristic $2$.

\end{abstract}
\hrulefill

{\textbf{Keywords}: Inclusion-Exclusion Principle, Irreducible Matrix, MDS Matrix, Semi-involutory Matrix.}

{\small \textbf{2020 Mathematics Subject Classification.} Primary: 05B20, 12E20, 15B99; Secondary: 94A60, 94B05}.\\

\vspace{-.37cm}

\section{\bf Introduction}

In the design of symmetric-key cryptography, the concept of ``confusion and diffusion'' was introduced by Claude Shannon in his seminal paper ``Communication Theory of Secrecy System'' \cite{CS}. The confusion layer hides the relationship between the key and the ciphertext while the goal of the diffusion layer is to conceal the relationship between the ciphertext and the plain text.
 Perfect diffusion can be achieved by either multipermutations \cite{vau} or by using maximum distance separable (MDS) matrices. An MDS matrix offers perfect diffusion because of its maximum branch number. Due to this, MDS matrices play an important role in the security against differential and linear attacks in the design of block ciphers and hash functions. Many modern age block ciphers such as AES \cite{DR}, Twofish \cite{SKWWHF}, SQUARE \cite{DKR}, SHARK \cite{RDPB} etc., and hash functions like Whirlpool \cite{BR}, PHOTON \cite{GPP} rely MDS matrices for enhanced security.

It is known that not all MDS matrices are suitable for the diffusion layer of lightweight cryptography. In this context, the construction of an MDS matrix with efficient implementation is a well studied problem.  
In \cite{YTH}, Youssef {\it{et al.}} introduced the idea of involutory linear transformation in the construction of a substitution permutation network (SPN) based encryption scheme. Motivated by this, the authors in \cite{YMT} studied two construction methods for involution linear transformations based on MDS codes. In the first, they constructed an $n \times n$ involutory matrix from an arbitrary non-singular matrix of order $ \frac{n}{2} \times \frac{n}{2}$ over the finite field $\mathbb{F}_{2^m}$. The other method used the Cauchy matrix to construct an involutory matrix over the finite field. Subsequently, in $2012$, Sajadieh {\it{et al.}} constructed involutory MDS matrices from Vandermonde matrices over $\mathbb{F}_{2^m}$ \cite{SDMO}. In $2013$, Gupta and Ray \cite{GR} introduced four distinct methods for constructing MDS matrices using the Cauchy matrix. They also provided construction of an involutory MDS matrix over the finite field $\mathbb{F}_{2^m}$ when the order of the matrix is a power of $2$. Additionally, they gave the construction of an involutory Hadamard MDS matrix using Vandermonde based techniques in \cite{GR}. Inspired by the use of circulant MDS matrix in the diffusion layer of AES,~ Gupta {\it{et al.}} \cite{GR1} studied the circulant involutory matrices over $\mathbb{F}_{2^m}$ and proved the non-existence of involutory circulant matrices of both even and odd order. In $2016$, Liu and Sim \cite{LS} provided examples of left circulant MDS matrices of odd order with involutory property.  

 In $2007$, Barreto {\it{et al.}} designed a block cipher named Curupira \cite{BS}. Curupia is a specially designed block cipher for  platforms where power consumption and processing time are very constrained
resources, such as sensor and mobile networks or systems heavily reliant on tokens or smart cards. The involutory MDS matrix used in the block cipher is constructed in a particularly intriguing manner which is the following:\\ Consider the matrix $D=I+aA+bB, a,b \in \mathbb{F}_{2^m}$ with $A=\begin{bmatrix}
1 &1 &1\\
0& 0 & 0\\
1 &1& 1\\
\end{bmatrix}$ and $B=\begin{bmatrix}
0 &0 &0\\
1&1 & 1\\
1 &1& 1\\
\end{bmatrix}$. Here $A^2=B^2=O$ and $AB=BA$. Note that, $I$ is the identity matrix and $O$ is the zero matrix of order $3 \times 3$. Observe that the matrix $D$ satisfies $D^2=I$ over the finite field $\mathbb{F}_{2^m}$. Additionally, $D$ is MDS if and only if $a \neq 0,1; b \neq 0,1,a,a+1.$ Therefore $D=I+2A+4B$ is an MDS involutory matrix over finite fields of characteristic $2$ which is used in Curupia.


Inspired by the use of order $3$ matrix in diffusion layer, in \cite{GSARC}, G\"{u}zel {\it{et al.}} introduced a general format to construct all $3 \times 3$ involutory MDS  matrices over $\mathbb{F}_{2^m}$.  They also counted the total number of such matrices and proved that the number of $1'$s in a $3 \times 3$ involutory MDS  matrix is at most $3$. Additionally, they provided experimental results regarding $3 \times 3$ involutory MDS  matrices with low implementation cost over the finite fields $\mathbb{F}_{2^3}, \mathbb{F}_{2^4}$ and $\mathbb{F}_{2^8}$. Many authors \cite{GR,SDMO} studied involutory MDS matrices due to their efficiency in the implementation for the decryption layer of block ciphers. Recently, in $2021$, Cheon {\it{et al.}} \cite{CCK} introduced semi-involutory matrices  as a broader generalization of the involutory property, thereby expanding the range of matrix constructions. Subsequently, in $2023$, Chatterjee {\it{et al.}} proved that some Cauchy based MDS constructions given by Gupta and Ray \cite{GR} are semi-involutory \cite{TAS}. They introduced a new construction of the MDS semi-involutory matrix using the Cauchy matrix over the finite field of characteristic $p >2$. Moreover, they proved that $3 \times 3$ semi-involutory matrices with all non-zero entries are MDS.
As a natural generalization of this result, in this article, we study whether it is possible to construct a representative matrix for these MDS semi-involutory matrices of order $3 \times 3$. Such direct constructions are particularly important because they significantly reduce the search space, making the design process for MDS matrices more efficient.
Moreover, in \cite{TA1, TA2, TA3}, the authors also studied circulant and cyclic matrices with semi-involutory and semi-orthogonal properties over finite fields.
Many authors have continued the search of MDS matrices from finite fields to rings and modules. In $1995$, Zain and Rajan defined MDS codes over cyclic groups \cite{ZR} and Dong {\it{et al.}} characterized MDS codes over elementary Abelian groups \cite{DCG}. By considering a finite Abelian group as a torsion module over a PID, Chatterjee {\it{et al.}} proved some non-existence results of MDS matrices in $2022$ \cite{TAS1}.

\subsection{Contribution} 
In this article, we extend the previous results as in \cite{GSARC} for a bigger class of semi-involutory matrices and provide the general structure for $3 \times 3$ semi-involutory matrices in \S~\ref{general structures}. An advantage of considering $3 \times 3$ semi-involutory matrices instead of involutory ones is that they are MDS when all entries are non-zero over a finite field. Consequently, we present a necessary and sufficient condition for the general structure of $3 \times 3$ semi-involutory matrices to be MDS based on the aforementioned condition. In \S~\ref{cardinality},  we count the total number of $3 \times 3$ semi-involutory MDS matrices over a finite field of characteristic $2$ using a set of $6$-tuple. 

\section{Organization of the paper}
The organization of the paper is as follows. In section \ref{preli}  we provide definitions and preliminaries. In section \ref{general structures}, we prove the general structure of $3 \times 3$ semi-involutory matrices over a finite field of characteristic $2$. In section \ref{cardinality}, we determine the total number of $3 \times 3$ semi-involutory MDS matrices using the results from the previous section. Finally, in Section \ref{conclusion}, we conclude the paper.

\section{\bf Preliminaries}\label{preli}
Let $\mathbb{F}_q$ denote the finite field with $q=p^m$ elements where $p$ is a prime number and $m$ is a positive integer. $\mathbb{F}_q^*$ denotes the set of all non-zero elements of $\mathbb{F}_q$. Let $\mathcal{C}$ be an $[n,k,d]$ linear error correcting code over the finite field $\mathbb{F}_q$. Then $\mathcal{C}$ is a subspace of $\mathbb{F}_q^n$ of dimension $k$ such that the Hamming distance between any two vectors in $\mathcal{C}$ is at least $d$.
 The code $\mathcal{C}$ is a maximum distance separable (MDS) code if the Singleton bound is attained, i.e., $d = n-k+1$. The generator matrix $G$ of an MDS code $\mathcal{C}$ is a $k \times n$ matrix such that any set of $k$ columns of $G$ is  linearly independent. The standard form of $G$ is $[I|A]$, where $I$  is a $k \times k$ identity matrix and $A$ is a $k \times (n-k)$ matrix. From \cite{MS}, another definition of MDS code is the following.
\begin{definition}
An $[n,k,d]$ code $\mathcal{C}$ with the generator matrix $G=[I|A]$, where $A$ is a $k \times (n-k)$ matrix, is MDS if and only if every $i \times i$ submatrix of $A$ is non-singular, $i=1,2,\cdots,\text{min}(k,n-k)$.
\end{definition}
 This definition of MDS code gives the following characterization of an MDS matrix.

\begin{definition}
A square matrix $A$ is said to be MDS if every square submatrix of $A$ is non-singular.
\end{definition}

The construction of an MDS matrix with an easily implementable inverse matrix has received a lot of attention due to the use of the inverse matrix in decryption layer of block cipher. Since an involutory matrix has the same implementation circuit with its inverse, involutory MDS matrices are beneficial when the decryption process is required.

\begin{definition}
A square matrix $A$ is said to be involutory if $A^2=I$.
\end{definition}
An example of MDS matrix with involutory property is the following:
\begin{example}
Consider the finite field $\mathbb{F}_{2^4}$ with generating polynomial $x^4+x+1$. Let $\alpha$ be a primitive element of this field. Consider the matrix $$M=\begin{bmatrix}
\alpha^3 & \alpha^3+1 & \alpha^3+1\\
\alpha^3+\alpha^2+\alpha & \alpha^3+\alpha^2+\alpha+1 & \alpha^3+\alpha^2+\alpha\\
\alpha^3+\alpha+1 & \alpha^3+\alpha+1 & \alpha^3+\alpha+1
\end{bmatrix}.$$
This matrix satisfy $M^2=I$ and also an MDS matrix. 
\end{example}

An exhaustive search for an involutory MDS matrix can be time-consuming, particularly when dealing with a large field size. 
As a result, determining the total number of involutory MDS matrices over a finite field becomes an interesting problem. In \cite{GSARC}, G\"{u}zel {\it{et al.}} 
proved that there exist $(2^m-1)^2(2^m-2)(2^m-4)$ involutory MDS matrices of order $3 $ over the finite field $\mathbb{F}_{2^m}$. Since our objective is to extend this result to a bigger class of matrices, we now provide the definition of such matrices.

The definition of semi-involutory matrix is the following \cite{CCK}.
\begin{definition}
A non-singular matrix $M$ is said to be semi-involutory if there exist non-singular diagonal matrices $D_1$ and $D_2$ such that $M^{-1} = D_1MD_2$.  
\end{definition}

In the following example we provide a $2 \times 2$ semi-involutory matrix over the finite field $\mathbb{F}_{2^2}$.
\begin{example}
Consider the finite field $\mathbb{F}_{2^2}$ with elements $\{0,1,\alpha,\alpha+1\}$. Let $A=\begin{bmatrix}
1 & \alpha\\
\alpha+1 & \alpha
\end{bmatrix}$. Consider $D_1=D_2=\begin{bmatrix}
\alpha & 0\\
0 & 1\\
\end{bmatrix}$. Then $A^{-1}=D_1AD_2=\begin{bmatrix}
\alpha+1 & \alpha+1\\
1 & \alpha
\end{bmatrix}$.
\end{example}
It is also possible to construct semi-involutory  matrices over fields of odd characteristic.  For example,
\begin{example}
Consider the finite field $\mathbb{F}_{11}$ and $A=\begin{bmatrix}
7 & 3\\
4 & 2\\
\end{bmatrix}$. Here $\det(A)=2$. Consider $D_1=\begin{bmatrix}
4 & 0\\
0 & 8\\
\end{bmatrix}$ and $D_2=\begin{bmatrix}
2 & 0\\
0 & 4\\
\end{bmatrix}.$ Then $A^{-1}=D_1AD_2$.
\end{example}

Some equivalent conditions of semi-involutory matrices are given in \cite{CCK}, which are as follows:
\begin{enumerate}
    \item $A$ is semi-involutory.
    \item $A^{-1}$ and $A^T$ are semi-involutory.
    \item $DAD'$ is semi-involutory for any non-singular diagonal matrices $D$ and $D'$.
    \item $P^TAP$ is semi-involutory for any permutation matrix $P$.
    \item $ADA$ is non-singular and diagonal for some diagonal matrix $D$.
\end{enumerate}
We say a matrix $D$ in the equivalent definition $5$ as an associated diagonal matrix for the semi-involutory matrix $A$.
A matrix $M_1$ is said to be “permutation similar” to another matrix $M_2$ if and only if $M_2$ can be obtained by permutation of any rows (or columns) of $M_1$.
 
\begin{definition}\label{def1}
A square matrix $A$ is said to be
reducible if there exists a permutation matrix $P$ such that $PAP^T=\begin{bmatrix}
A_1 & A_2\\
0 & A_3
\end{bmatrix},$ where $A_1$ and $A_3$ are square matrices of order at least $1$. A matrix is said to be irreducible if
it is not reducible.
\end{definition}

Cheon {\it{et al.}} provided the following characterization of $3 \times 3$ semi-involutory matrices \cite{CCK}. 
\begin{theorem}\label{3x3 case}
Let $A=(a_{ij})$ be a real matrix of order $3 \times 3$. Then $A$ is semi-involutory if and only if $A$ is non-singular and one of the following holds.
\begin{itemize}
\item Up to permutation similarity $A$ is a reducible matrix of the form $A=\begin{bmatrix}
B & \mathbf{x}\\
\mathbf{0}^T & c\\
\end{bmatrix}$ such that $B^{-1}=D_1BD_2$ for some non-singular diagonal matrices $D_1$ and $D_2$, and $\mathbf{x}=\mathbf{0}$ or $\mathbf{x}$ is an eigenvector of $BD_1$.
\item Up to permutation similarity $a_{11}=0$ is the only zero entry in $A$, $\det A(1|1)=0$ and $a_{12}a_{23}a_{31} = a_{13}a_{21}a_{32}$. Here, $A(1|1)$ denotes the submatrix of $A$ obtained by removing the first row and first column.
\item A is nowhere zero, i.e., $a_{ij} \neq 0, 1 \leq i,j \leq 3,$ $a_{12}a_{23}a_{31} = a_{13}a_{21}a_{32}$ and $\det X =0$, where $X=\begin{bmatrix}
a_{11}a_{21} & a_{21}a_{22} & a_{23}a_{31}\\
a_{11}a_{31} & a_{21}a_{32} & a_{31}a_{33}\\
a_{12}a_{31} & a_{22}a_{32} & a_{32}a_{33}\\
\end{bmatrix}$.
\end{itemize}
\end{theorem}
Note that Theorem \ref{3x3 case} also holds when the entries of the matrix are from finite fields.


Cheon {\it{et al.}} also established a relationship between the non-singular diagonal matrices of a semi-involutory matrix using the irreducible property. This result is presented in the following theorem.

\begin{theorem}\label{diag mat cond}
Let $A$ be an irreducible semi-involutory matrix of order $n \times n$ such that $A^{-1}=D_1AD_2$, where $D_1$ and $D_2$ are non-singular diagonal matrices. Then $D_1=cD_2$ for some non-zero constant $c$.
\end{theorem}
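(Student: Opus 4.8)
The plan is to convert the defining relation $A^{-1} = D_1 A D_2$ into the statement that $A$ commutes with a single diagonal matrix, and then to exploit irreducibility through the zero/nonzero pattern of $A$.

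First I would invert both sides of $A^{-1} = D_1 A D_2$ to obtain $A = D_2^{-1} A^{-1} D_1^{-1}$, and then substitute the defining relation for $A^{-1}$ back into the right-hand side. Since diagonal matrices commute with one another, the result rearranges to $A = (D_2^{-1}D_1)\, A\, (D_2^{-1}D_1)^{-1}$, using that $D_2 D_1^{-1} = D_1^{-1} D_2 = (D_2^{-1}D_1)^{-1}$. Writing $E := D_2^{-1}D_1$, which is again a non-singular diagonal matrix, this says precisely that $EA = AE$, so $A$ commutes with $E$. I expect this algebraic step, namely recognizing that the two unknown diagonal matrices can be bundled into one commuting diagonal factor, to be the crux of the argument; everything afterward is a consequence of commutation plus irreducibility.

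Next I would read off what commutation with a diagonal matrix $E = \operatorname{diag}(e_1,\dots,e_n)$ means entrywise. Comparing the $(i,j)$ entries of $AE$ and $EA$ gives $a_{ij}e_j = e_i a_{ij}$, that is, $a_{ij}(e_i - e_j) = 0$ for all $i,j$. Hence $e_i = e_j$ whenever $a_{ij} \neq 0$, so the diagonal of $E$ is forced to be constant along every nonzero entry of $A$.

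Finally I would bring in irreducibility. By Definition \ref{def1}, $A$ being irreducible means that no permutation similarity makes it block upper triangular, equivalently that its associated directed graph (with an edge $i \to j$ whenever $a_{ij}\neq 0$) is strongly connected, and in particular the underlying undirected adjacency graph is connected. The relations $e_i = e_j$ obtained along every nonzero entry then propagate across this connected graph and force $e_1 = \dots = e_n =: c$, a nonzero constant. Therefore $E = cI$, i.e. $D_2^{-1}D_1 = cI$, which gives $D_1 = cD_2$, as required. The one point demanding care is the precise translation of ``irreducible'' into connectivity of the pattern graph of $A$; once that correspondence is in place, the conclusion is immediate.
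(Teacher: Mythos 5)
Your argument is correct, but note that the paper does not actually prove this theorem: it is quoted verbatim from Cheon \emph{et al.}\ \cite{CCK} with no proof supplied, so there is no in-paper argument to compare against. Taken on its own terms, your proof is sound and is essentially the standard one. The reduction is right: inverting $A^{-1}=D_1AD_2$ and re-substituting gives $A=(D_2^{-1}D_1)A(D_2^{-1}D_1)^{-1}$, so $A$ commutes with $E=D_2^{-1}D_1$; the entrywise identity $a_{ij}(e_i-e_j)=0$ then forces $e_i=e_j$ along every nonzero entry; and connectivity of the pattern graph propagates this to $e_1=\dots=e_n=c\neq 0$, whence $D_1=cD_2$. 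The one point that deserves an explicit flag is exactly the one you identified: the paper's Definition of reducibility literally says ``permutation-similar to an upper triangular matrix,'' whereas your argument (and the truth of the theorem) requires the standard \emph{block} upper triangular reading, which is what makes ``irreducible'' equivalent to strong connectivity of the digraph and in particular to connectivity of the undirected pattern graph. Under the literal reading the theorem is false: the $4\times 4$ direct sum of two anti-diagonal $2\times 2$ blocks is a fixed-point-free permutation matrix, so every $P^TAP$ is non-singular with zero diagonal and hence never triangular, making $A$ ``irreducible'' in the literal sense; yet $A^{-1}=A=D_1AD_2$ with $D_1=\mathrm{diag}(1,1,2,2)$, $D_2=\mathrm{diag}(1,1,2^{-1},2^{-1})$, and $D_1\neq cD_2$. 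You silently inserted the word ``block,'' which is the intended meaning from \cite{CCK}; making that correction explicit would close the only gap between your proof and the paper's stated definition.
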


\section{Structure of 3$\times$3 semi-involutory MDS matrices}\label{general structures}
In \cite{GSARC}, G\"{u}zel {\it{et al.}} provided a structure to construct all $3 \times 3$ involutory matrices using four arbitrary non-zero elements of the finite field $\mathbb{F}_{2^m}$. We first generalize their result for the case of semi-involutory irreducible matrices in the following theorem. 

\begin{theorem}\label{gen 3x3 inv}
Let $A=\begin{bmatrix}
a_{11} & a_{12} & a_{13}\\
a_{21} & a_{22} & a_{23}\\
a_{31} & a_{32} & a_{33}\\
\end{bmatrix}$ be a $3 \times 3$ irreducible, semi-involutory matrix with an associated diagonal matrix $D=$diagonal$(d_1,d_2,d_3)$ over the finite field $\mathbb{F}_{2^m}$, where $m$ is a positive integer. Then the non-diagonal entries of $A$ can be expressed 
as follows:
$a_{12}=(a_{11}d_1+a_{33}d_3)d_2^{-1}x, ~a_{13}= (a_{11}d_1+a_{22}d_2)d_3^{-1}xy,~ a_{21}= (a_{22}d_2+a_{33}d_3)d_1^{-1}x^{-1},~ a_{23}=(a_{22}d_2+a_{11}d_1)d_3^{-1}y,~ a_{31}= (a_{33}d_3+a_{22}d_2)d_1^{-1}(xy)^{-1}$ and $a_{32}= (a_{33}d_3+a_{11}d_1)d_2^{-1}y^{-1},$ 
where $x,y$ are non-zero elements of $\mathbb{F}_{2^m}$.
\end{theorem}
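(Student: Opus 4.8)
The plan is to start from the equivalent description of semi-involutory matrices given by condition $5$ in the list preceding Theorem~\ref{3x3 case}: since $D=\mathrm{diag}(d_1,d_2,d_3)$ is an associated diagonal matrix of $A$, the product $ADA$ is a non-singular \emph{diagonal} matrix. First I would compute $ADA$ entrywise via $(ADA)_{ij}=\sum_{k}a_{ik}d_k a_{kj}$ and set the six off-diagonal entries equal to zero. Grouping the terms of each expression and using $\mathrm{char}\,\mathbb{F}_{2^m}=2$ (so that $+$ and $-$ coincide), this yields the system
\begin{align*}
a_{12}(a_{11}d_1+a_{22}d_2) &= a_{13}a_{32}d_3, &
a_{21}(a_{11}d_1+a_{22}d_2) &= a_{23}a_{31}d_3,\\
a_{13}(a_{11}d_1+a_{33}d_3) &= a_{12}a_{23}d_2, &
a_{31}(a_{11}d_1+a_{33}d_3) &= a_{21}a_{32}d_2,\\
a_{23}(a_{22}d_2+a_{33}d_3) &= a_{13}a_{21}d_1, &
a_{32}(a_{22}d_2+a_{33}d_3) &= a_{12}a_{31}d_1.
\end{align*}
Each equation couples one off-diagonal entry, multiplied by one of the three ``diagonal sums'' $a_{11}d_1+a_{22}d_2$, $a_{11}d_1+a_{33}d_3$, $a_{22}d_2+a_{33}d_3$, with a product of two other off-diagonal entries.

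Next I would record that every entry $a_{ij}$ is non-zero, which is part of the characterization in Theorem~\ref{3x3 case} (equivalently, a $3\times3$ semi-involutory matrix is MDS, as recalled from \cite{TAS}, so each $1\times1$ minor is non-zero). Consequently the right-hand side of every displayed equation is a product of non-zero field elements, and dividing by the relevant off-diagonal entry shows that each of the three diagonal sums is itself non-zero; this is precisely what legitimizes the divisions to come. I would then introduce the two free parameters by setting
\[
x:=a_{12}\,d_2\,(a_{11}d_1+a_{33}d_3)^{-1}, \qquad
y:=a_{23}\,d_3\,(a_{11}d_1+a_{22}d_2)^{-1},
\]
both non-zero, which already reproduces the asserted formulas for $a_{12}$ and $a_{23}$.

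The heart of the argument is a triangular solve of the system. Substituting the expressions for $a_{12}$ and $a_{23}$ into $a_{13}(a_{11}d_1+a_{33}d_3)=a_{12}a_{23}d_2$ gives $a_{13}=(a_{11}d_1+a_{22}d_2)d_3^{-1}xy$; feeding this back into $a_{12}(a_{11}d_1+a_{22}d_2)=a_{13}a_{32}d_3$ produces $a_{32}=(a_{33}d_3+a_{11}d_1)d_2^{-1}y^{-1}$; the equation $a_{23}(a_{22}d_2+a_{33}d_3)=a_{13}a_{21}d_1$ then yields $a_{21}=(a_{22}d_2+a_{33}d_3)d_1^{-1}x^{-1}$; and finally $a_{21}(a_{11}d_1+a_{22}d_2)=a_{23}a_{31}d_3$ gives $a_{31}=(a_{33}d_3+a_{22}d_2)d_1^{-1}(xy)^{-1}$. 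Each step is a single division by an already-known non-zero quantity, and the diagonal sums cancel cleanly because there are no signs to track in characteristic $2$. This recovers all six formulas in the statement.

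I expect the main difficulty to be organizational rather than conceptual: one must choose the right pairing of the six quadratic constraints so that two of them define the parameters $x,y$ and the remaining four can be solved one variable at a time without circularity, while checking at the outset that the three diagonal sums are non-zero so that every division is valid. Note that the above uses only four of the six equations; I would close the proof by verifying that the two leftover constraints, $a_{31}(a_{11}d_1+a_{33}d_3)=a_{21}a_{32}d_2$ and $a_{32}(a_{22}d_2+a_{33}d_3)=a_{12}a_{31}d_1$, are satisfied identically by the formulas just derived, each side reducing to the same monomial in the diagonal sums, the $d_i$, and $x,y$. This is the only genuine consistency check, and it is also the place where any error in the grouping of the equations would surface.
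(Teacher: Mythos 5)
Your proof is correct, and while it rests on the same underlying system of quadratic constraints as the paper, it is organized along a genuinely different route. The paper invokes Theorem~\ref{diag mat cond} to write $A^{-1}=cDAD$, expands $(DA)^2=aI$ into nine equations, recombines the six off-diagonal ones into the three symmetric products (\ref{eq 11})--(\ref{eq 13}), splits each product with an arbitrary non-zero parameter (giving three parameters $x,y,z$), and only then uses the relation $a_{12}a_{23}a_{31}=a_{13}a_{21}a_{32}$ from Theorem~\ref{3x3 case} to force $z=xy$. You instead work directly from the defining property of the associated diagonal matrix --- the vanishing of the off-diagonal entries of $ADA$ --- which lets you skip Theorem~\ref{diag mat cond} and the cyclic product relation entirely: you define $x$ and $y$ concretely from $a_{12}$ and $a_{23}$ and then solve the remaining equations one unknown at a time, so the constraint that the paper extracts from $a_{12}a_{23}a_{31}=a_{13}a_{21}a_{32}$ emerges automatically from the triangular solve. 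Your version also makes explicit a point the paper leaves implicit, namely that the three sums $a_{11}d_1+a_{22}d_2$, $a_{11}d_1+a_{33}d_3$, $a_{22}d_2+a_{33}d_3$ are non-zero (deduced from the non-vanishing of all entries, which both proofs import from Theorem~\ref{3x3 case}); without this the divisions defining $x$ and $y$ --- in either proof --- would not be justified. One small remark: your closing ``consistency check'' on the two leftover equations is not logically required, since all six off-diagonal equations hold automatically for the given semi-involutory $A$ and you only need four of them to derive the formulas; it is a harmless sanity check rather than a gap to be filled.
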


\begin{proof}
Since $A$ is an irreducible, semi-involutory matrix, by Theorem \ref{diag mat cond}, we have $A^{-1}=cDAD$. This implies that $(DA)^2=c^{-1}I$. Let $c^{-1}=a \in \mathbb{F}_{2^m}$. The diagonal and non diagonal entries of $(DA)^2$ satisfy the following conditions:
\begin{eqnarray}
a_{11}^2d_1^2+a_{12}a_{21}d_1d_2+a_{13}a_{31}d_1d_3=a\label{eq 1}\\
a_{22}^2d_2^2+a_{12}a_{21}d_1d_2+a_{23}a_{32}d_2d_3=a\label{eq 2}\\
a_{33}^2d_3^2+a_{13}a_{31}d_1d_3+a_{23}a_{32}d_2d_3=a\label{eq 3}\\
a_{11}a_{12}d_1^2+a_{12}a_{22}d_1d_2+a_{13}a_{32}d_1d_3=0\label{eq 4}\\
a_{11}a_{13}d_1^2+a_{12}a_{23}d_1d_2+a_{13}a_{33}d_1d_3=0\label{eq 5}\\
a_{11}a_{21}d_1d_2+a_{21}a_{22}d_2^2+a_{23}a_{31}d_2d_3=0\label{eq 6}\\
a_{13}a_{21}d_1d_2+a_{22}a_{23}d_2^2+a_{23}a_{33}d_2d_3=0\label{eq 7}\\
a_{11}a_{31}d_1d_3+a_{21}a_{32}d_2d_3+a_{31}a_{33}d_3^2=0\label{eq 8}\\
a_{12}a_{31}d_1d_3+a_{22}a_{32}d_2d_3+a_{32}a_{33}d_3^2=0 \label{eq 9}
\end{eqnarray}

Adding (\ref{eq 1}), (\ref{eq 2}) and (\ref{eq 3}) we get 
\begin{eqnarray}
a_{11}^2d_1^2+a_{22}^2d_2^2+a_{33}^2d_3^2=a \label{eq 10}. 
\end{eqnarray}

Following the idea from Theorem $1$ in \cite{GSARC}, equations (\ref{eq 4}),(\ref{eq 5}),(\ref{eq 6}), (\ref{eq 7}), (\ref{eq 8}) and (\ref{eq 9}) can be re-written  as follows:

\begin{eqnarray}
a_{12}d_1(a_{11}d_1+a_{22}d_2)=a_{13}a_{32}d_1d_3\label{eq 11}\\
a_{13}d_1(a_{11}d_1+a_{33}d_3)=a_{12}a_{23}d_1d_2\label{eq 12}\\
a_{21}d_2(a_{11}d_1+a_{22}d_2)=a_{23}a_{31}d_2d_3\label{eq 13}\\
a_{23}d_2(a_{22}d_2+a_{33}d_3)=a_{13}a_{21}d_1d_2\label{eq 14}\\
a_{31}d_3(a_{11}d_1+a_{33}d_3)=a_{21}a_{32}d_2d_3\label{eq 15}\\
a_{32}d_3(a_{22}d_2+a_{33}d_3)=a_{21}a_{31}d_1d_3\label{eq 16}
\end{eqnarray}

Multiplying equations (\ref{eq 11}) and (\ref{eq 13}), equations (\ref{eq 12}) and (\ref{eq 15}), and equations (\ref{eq 14}) and (\ref{eq 16}), we obtain the following:
\begin{eqnarray}
a_{12}a_{21}d_1d_2(a_{11}d_1+a_{22}d_2)^2=a_{13}a_{31}a_{23}a_{32}d_1d_2d_3^2\label{eq 17}\\
a_{13}a_{31}d_1d_3(a_{11}d_1+a_{33}d_3)^2=a_{12}a_{21}a_{23}a_{32}d_1d_2^2d_3\label{eq 18}\\
a_{23}a_{32}d_2d_3(a_{22}d_2+a_{33}d_3)^2=a_{13}a_{31}a_{12}a_{21}d_1^2d_2d_3\label{eq 19}
\end{eqnarray}

Multiplying equations (\ref{eq 18}), (\ref{eq 19}) we get:
\begin{eqnarray}
(a_{11}d_1+a_{33}d_3)(a_{22}d_2+a_{33}d_3)=a_{12}a_{21}d_1d_2 \label{eq 20}
\end{eqnarray}

Similarly, multiplying equations (\ref{eq 17}) and (\ref{eq 18}), we have:
\begin{eqnarray}
(a_{11}d_1+a_{22}d_2)(a_{11}d_1+a_{33}d_3)=a_{23}a_{32}d_2d_3 \label{eq 21}
\end{eqnarray}
Finally, multiplying equations (\ref{eq 17}) and (\ref{eq 19}), we obtain:
\begin{eqnarray}
(a_{11}d_1+a_{22}d_2)(a_{22}d_2+a_{33}d_3)=a_{13}a_{31}d_1d_3. \label{eq 22}
\end{eqnarray}


Multiply the first term in the product by $d_2^{-1}$ and second term by $d_1^{-1}$ in the left hand side of equation (\ref{eq 20}), we can write $a_{12}=(a_{11}d_1+a_{33}d_3)d_2^{-1}x$ and $a_{21}=(a_{22}d_2+a_{33}d_3)d_1^{-1}x^{-1}$ where $x$ is an non-zero element of $\mathbb{F}_{2^m}.$ 
Similarly, form (\ref{eq 21}), we get $a_{23}=(a_{22}d_2+a_{11}d_1)d_3^{-1}y$ and $a_{32}=(a_{33}d_3+a_{11}d_1)d_2^{-1}y^{-1}$, where $y$ is an non-zero element of $\mathbb{F}_{2^m}$. Finally, from (\ref{eq 22}) we get $a_{13}=(a_{11}d_1+a_{22}d_2)d_3^{-1}z$ and $a_{31}=(a_{33}d_3+a_{22}d_2)d_1^{-1}z^{-1}$ where $z$ is an non-zero element of $\mathbb{F}_{2^m}$.
Since $A$ is semi-involutory and irreducible, entries of $A$ satisfy $a_{12}a_{23}a_{31}=a_{13}a_{21}a_{32}$ from Theorem \ref{3x3 case}. This implies $x,y$ and $z$ satisfy $xyz^{-1}=x^{-1}y^{-1}z$. By choosing $z=xy$ we get desired $a_{13}$ and $a_{31}.$
\end{proof}

\begin{remark} \label{gen 3x3 inv example}
Irreducible semi-involutory matrices may not be involutory. Thus Theorem \ref{gen 3x3 inv}
holds for a more general class of matrices. For example, 
consider the finite field $\mathbb{F}_{2^3}$ with generating polynomial $x^3+x^2+1$. Let $\alpha$ be a primitive element of the finite field.
Consider the matrix $A= \begin{bmatrix}
\alpha^2+\alpha & 1 & \alpha^2+1\\
1 & \alpha^2+\alpha & \alpha+1\\
\alpha^2+1 &  \alpha+1 & \alpha^2+\alpha
\end{bmatrix}$. Then $A$ is semi-involutory and irreducible with $D=$ diagonal $(\alpha^2+\alpha+1, \alpha^2+\alpha, \alpha+1)$ and $c=1$  but $A^2 \neq I$. 
\end{remark}
The converse of Theorem \ref{gen 3x3 inv} is not necessarily true. For example,
\begin{example}
Consider the finite field $\mathbb{F}_{2^2}$ with generating polynomial $x^2+x+1$. Let $\beta$ be a primitive element and $a_{11}=1, a_{22}=\beta, a_{33}=\beta+1, d_1=\beta, d_2=\beta+1$ and $d_3=1$. Take $x=\beta, y=\beta+1$. Then the matrix 
\begin{eqnarray}
A=\begin{bmatrix}
a_{11} & (a_{11}d_1+a_{33}d_3)d_2^{-1}x& (a_{11}d_1+a_{22}d_2)d_3^{-1}xy\\
(a_{22}d_2+a_{33}d_3)d_1^{-1}x^{-1} & a_{22} & (a_{11}d_1+a_{22}d_2)d_3^{-1}y\\
(a_{22}d_2+a_{33}d_3)d_1^{-1}(xy)^{-1} & (a_{11}d_1+a_{33}d_3)d_2^{-1}y^{-1} & a_{33}\\
\end{bmatrix} \label{matrix1}
\end{eqnarray}
is equal to $\begin{bmatrix}
1 & \beta+1 & \beta+1\\
\beta+1 & \beta & \beta\\
1 & \beta+1 & \beta+1\\
\end{bmatrix}$ with $\det A$ is zero. Hence $A$ is not semi-involutory.
\end{example}

Observe that equation (\ref{matrix1}) is the generalized form of the matrix provided in Theorem $1$ of \cite{GSARC}. Substituting $d_1=d_2=d_3=1$ and $a=1$ in equation (\ref{eq 10}), we can deduce to the generalized form of $3 \times 3$ involutory matrix.

Under certain condition we prove the converse of  Theorem \ref{gen 3x3 inv} in the following theorem.

\begin{theorem}\label{converse gen strc SI}
Let $A$ be a $3 \times 3$ matrix over $\mathbb{F}_{2^m}$ as described in equation (\ref{matrix1}), where$a_{11},a_{22},a_{33},d_1,d_2,d_3,x,y$ are non-zero and $a_{11}d_1+a_{22}d_2+a_{33}d_3 \neq 0$. Then $A$ is semi-involutory. Moreover, if $a_{11}d_1+a_{22}d_2, a_{22}d_2+a_{33}d_3$ and $ a_{11}d_1+a_{33}d_3$ are non-zero, then $A$ is irreducible. 
\end{theorem}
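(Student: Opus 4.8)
The plan is to prove the two assertions separately, leaning on the structural identity $(DA)^2 = aI$ that already drives the proof of Theorem~\ref{gen 3x3 inv}. Write $D = \mathrm{diag}(d_1,d_2,d_3)$ and set $a := (a_{11}d_1 + a_{22}d_2 + a_{33}d_3)^2$; since we are in characteristic $2$ the Frobenius map gives $a = a_{11}^2 d_1^2 + a_{22}^2 d_2^2 + a_{33}^2 d_3^2$. To establish that $A$ is semi-involutory I would substitute the six prescribed off-diagonal entries of (\ref{matrix1}) directly into the nine entries of $(DA)^2$ and check that $(DA)^2 = aI$. It is convenient to abbreviate $P_{12} = a_{11}d_1 + a_{22}d_2$, $P_{23} = a_{22}d_2 + a_{33}d_3$, $P_{13} = a_{11}d_1 + a_{33}d_3$, so that each off-diagonal entry of $A$ is a nonzero scalar multiple of exactly one $P_{ij}$; the characteristic-$2$ relation $P_{12} + P_{23} + P_{13} = 0$ (hence $P_{12}+P_{13}=P_{23}$, and its cyclic variants) will be used repeatedly.

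First I would verify the three diagonal entries. For instance the $(1,1)$ entry of $(DA)^2$ equals $a_{11}^2 d_1^2 + a_{12}a_{21}d_1 d_2 + a_{13}a_{31}d_1 d_3$; substituting turns the last two terms into $P_{13}P_{23}$ and $P_{12}P_{23}$, so the entry becomes $a_{11}^2 d_1^2 + P_{23}(P_{13} + P_{12}) = a_{11}^2 d_1^2 + P_{23}^2 = a$, using $P_{12} + P_{13} = P_{23}$ and $(u+v)^2 = u^2 + v^2$. The $(2,2)$ and $(3,3)$ entries are symmetric. Next I would check the six off-diagonal entries: the $(i,k)$ entry reduces to $a_{ik}(a_{ii}d_i + a_{kk}d_k) + a_{i\ell}a_{\ell k}d_\ell$ (times $d_i$), and after substitution the two summands turn out to be the \emph{same} monomial and therefore cancel in characteristic $2$; this is exactly the content of the vanishing conditions (\ref{eq 4})--(\ref{eq 9}). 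Having shown $(DA)^2 = aI$, the hypothesis $a_{11}d_1 + a_{22}d_2 + a_{33}d_3 \neq 0$ gives $a \neq 0$, so $DA$ and hence $A$ are non-singular and $A^{-1} = a^{-1}DAD = (a^{-1}D)\,A\,D$. Taking $D_1 = a^{-1}D$ and $D_2 = D$ exhibits $A^{-1} = D_1 A D_2$ with $D_1,D_2$ non-singular diagonal, so $A$ is semi-involutory by definition.

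For irreducibility I would argue by counting zero off-diagonal entries. From the explicit form, $a_{12},a_{32}$ are nonzero multiples of $P_{13}$, the pair $a_{13},a_{23}$ of $P_{12}$, and the pair $a_{21},a_{31}$ of $P_{23}$; since $x,y$ and every $d_i,a_{ii}$ are nonzero, an off-diagonal entry vanishes precisely when its associated $P_{ij}$ does. The hypothesis that two of $P_{12},P_{23},P_{13}$ are nonzero means at most one of them is zero — indeed, by $P_{12}+P_{23}+P_{13}=0$, two of them vanishing would force the third to vanish — so at most two off-diagonal entries of $A$ are zero. On the other hand, by Definition~\ref{def1} a $3\times 3$ matrix that is permutation-similar to an upper triangular matrix must have the three distinct off-diagonal entries in positions $(v_2,v_1),(v_3,v_1),(v_3,v_2)$, for some ordering $v_1,v_2,v_3$ of the indices, all equal to zero. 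With at most two zero off-diagonal entries this is impossible, and therefore $A$ is irreducible.

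The main obstacle is organizational rather than conceptual: the bulk of the work is the entrywise verification of $(DA)^2 = aI$, and the only delicate point there is carrying the characteristic-$2$ cancellations cleanly — routing everything through the three quantities $P_{ij}$ and the single relation $P_{12}+P_{23}+P_{13}=0$ keeps this under control. For the irreducibility claim the subtle step is to use the precise notion of reducibility adopted here, namely permutation-similarity to a genuinely (not merely block) upper triangular matrix, since this is exactly what reduces the question to the numerical comparison ``three forced zeros versus at most two available zeros.''
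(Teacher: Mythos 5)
Your proposal is correct and follows essentially the same route as the paper: both verify that $ADA$ (equivalently $(DA)^2=aI$ with $a=(a_{11}d_1+a_{22}d_2+a_{33}d_3)^2$) is a non-singular diagonal matrix, and both prove irreducibility by observing that at most one of the three quantities $a_{11}d_1+a_{22}d_2$, $a_{22}d_2+a_{33}d_3$, $a_{11}d_1+a_{33}d_3$ can vanish, so $A$ has at most two zero entries while permutation-similarity to an upper triangular $3\times 3$ matrix would force three. The only cosmetic difference is that the paper invokes the equivalent condition ``$ADA$ non-singular diagonal'' together with an explicit determinant formula, whereas you deduce non-singularity from $(DA)^2=aI$ and exhibit $D_1=a^{-1}D$, $D_2=D$ directly.
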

\begin{proof}
Consider the diagonal matrix $D$ with $d_1,d_2,d_3$ as consecutive diagonal entries. Then $D$ is a non-singular diagonal matrix and $ADA=$ diagonal $(d_1^{-1}a,d_2^{-1}a,d_3^{-1}a)$ where $a$ satisfies equation (\ref{eq 10}). Therefore $ADA$ is a non-singular diagonal matrix. Since $\det A= (a_{11}d_1+a_{22}d_2+a_{33}d_3)^3(d_1d_2d_3)^{-1},$ by the given condition $\det A \neq 0$. Hence from the equivalent condition $5$ of semi-involutory matrix, $A$ is semi-involutory. Observe that if $a_{11}d_1+a_{22}d_2, a_{22}d_2+a_{33}d_3$ and $ a_{11}d_1+a_{33}d_3$ are non-zero, then $A$ cannot be permutation similar to matrix of the form described in Definition \ref{def1}, as all entries would be non-zero. Consequently, $A$ is irreducible.
\end{proof}

Note that, MDS matrices are irreducible. Using this property we prove the following result.

\begin{theorem} \label{SI MDS}
Let $A$ be a $3 \times 3$ matrix over $\mathbb{F}_{2^m}$ as described in equation (\ref{matrix1}), where $a_{11},a_{22},a_{33},d_1,d_2,d_3,x,y$ are non-zero. Then $A$ is a semi-involutory MDS matrix if and only if $a_{11}d_1+a_{22} d_2, a_{11}d_1+a_{33}d_3, a_{22}d_2+a_{33}d_3$ and $ a_{11}d_1+a_{22} d_2+a_{33}d_3$ are non-zero elements of the finite field.
\end{theorem}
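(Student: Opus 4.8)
The plan is to read the MDS condition — that every square submatrix of $A$ is non-singular — directly off the structure in equation (\ref{matrix1}), and to see that the four displayed quantities are exactly the obstructions to the three sizes of minors. To lighten the bookkeeping I would set $P = a_{11}d_1 + a_{22}d_2$, $Q = a_{11}d_1 + a_{33}d_3$, $R = a_{22}d_2 + a_{33}d_3$ and $S = a_{11}d_1 + a_{22}d_2 + a_{33}d_3$, recording the characteristic-two identities $S = a_{11}d_1 + R = a_{22}d_2 + Q = a_{33}d_3 + P$, which are what collapse the minors at the end. In this notation the six off-diagonal entries are $a_{12} = Qd_2^{-1}x$, $a_{13} = Pd_3^{-1}xy$, $a_{21} = Rd_1^{-1}x^{-1}$, $a_{23} = Pd_3^{-1}y$, $a_{31} = Rd_1^{-1}(xy)^{-1}$, $a_{32} = Qd_2^{-1}y^{-1}$, so that the entries transparently carry $P$, $Q$, $R$ as factors.

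For the forward direction I would assume $A$ is semi-involutory and MDS and extract the conditions from the MDS hypothesis alone. Being MDS forces every entry (every $1\times 1$ minor) to be non-zero; since $a_{13}$, $a_{12}$, $a_{21}$ carry the factors $P$, $Q$, $R$ while the remaining factors $a_{ii}, d_i, x, y$ are non-zero by hypothesis, this yields $P, Q, R \neq 0$. Being MDS also forces $\det A \neq 0$, and invoking the formula $\det A = S^{3}(d_1d_2d_3)^{-1}$ established in Theorem \ref{converse gen strc SI} gives $S \neq 0$.

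For the converse, assume $P, Q, R, S$ are all non-zero. Since $S \neq 0$ together with the non-vanishing of $a_{11},a_{22},a_{33},d_1,d_2,d_3,x,y$, Theorem \ref{converse gen strc SI} already yields that $A$ is semi-involutory, so only MDS remains. The $1\times 1$ minors are non-zero because $P,Q,R \neq 0$, and the $3\times 3$ minor $\det A = S^{3}(d_1d_2d_3)^{-1}$ is non-zero because $S \neq 0$; the crux is the nine $2\times 2$ minors. Here the key observation — which I would verify by direct expansion — is that each $2\times 2$ minor factors as a non-zero monomial in $a_{ii},d_i,x,y$ times the single linear form $S$. For example, the minor on rows $\{1,2\}$ and columns $\{1,2\}$ is $a_{11}a_{22} + a_{12}a_{21} = a_{11}a_{22} + QR(d_1d_2)^{-1}$, and expanding $QR$ and cancelling the two $a_{11}a_{22}$ terms in characteristic two leaves $a_{33}d_3(d_1d_2)^{-1}S$; likewise the minor on rows $\{1,2\}$, columns $\{1,3\}$ equals $a_{11}a_{23} + a_{13}a_{21} = Pd_3^{-1}y\,(a_{11} + Rd_1^{-1}) = Pd_1^{-1}d_3^{-1}y\,S$. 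The remaining seven minors reduce in exactly the same way — factor out the common entry or off-diagonal product, then use a characteristic-two cancellation to expose $S$ — so all nine vanish precisely when $S = 0$. Hence $A$ is MDS, completing the equivalence.

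The main obstacle is organizational rather than conceptual: I must check that all nine $2\times 2$ minors really collapse to a non-zero multiple of $S$, and not to a multiple of $P$, $Q$ or $R$, since a spurious factor would alter the characterization. Because the minors are not interchanged by the obvious row/column swaps, I would not try to save cases by symmetry; instead I would tabulate the nine relevant products $a_{ij}a_{kl}$ in the $P,Q,R$ notation and perform each cancellation explicitly, taking care that the characteristic-two simplification consistently produces the form $S$. Once that verification is in place, the two directions assemble immediately from Theorem \ref{converse gen strc SI} and the definition of an MDS matrix.
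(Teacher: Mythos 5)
Your proposal is correct and follows essentially the same route as the paper: both directions rest on the determinant formula $\det A=(a_{11}d_1+a_{22}d_2+a_{33}d_3)^3(d_1d_2d_3)^{-1}$, an explicit computation of all nine $2\times2$ minors, and the identity $ADA=\mathrm{diagonal}(d_1^{-1}a,d_2^{-1}a,d_3^{-1}a)$ for the semi-involutory part (your appeal to Theorem \ref{converse gen strc SI} packages exactly what the paper redoes inline). One small correction: it is not true that every $2\times2$ minor is a non-zero monomial times $S$ alone --- six of the nine carry an additional factor of $P$, $Q$ or $R$ (your own second example, $Pd_1^{-1}d_3^{-1}yS$, exhibits this), so they do not vanish ``precisely when $S=0$''; this is harmless, however, since in the converse all four quantities are assumed non-zero and in the forward direction you already extract $P,Q,R\neq 0$ from the entries, which is in fact slightly cleaner than the paper's derivation of those conditions from the $2\times2$ minors.
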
 

\begin{proof}
Let $A$ be a semi-involutory MDS matrix. 
Then $A^{-1}$ exists and $\det A=(a_{11}d_1+a_{22}d_2+a_{33}d_3)^3(d_1d_2d_3)^{-1}$. This implies $a_{11}d_1+a_{22}d_2+a_{33}d_3$ is non-zero. Observe that, over the finite field $\mathbb{F}_{2^m}$, equation (\ref{eq 10}) can be written as $(a_{11}d_1+a_{22}d_2+a_{33}d_3)^2=a$ and there exists a non-zero element $b$ such that $b^2=a$.
Hence $a_{11}d_1+a_{22}d_2+a_{33}d_3=b$. 
The determinant of all $2 \times 2$ sub-matrices of $A$ are following:\\

\noindent$\begin{vmatrix}
a_{11} & (a_{11}d_1+a_{33}d_3)d_2^{-1}x\\
(a_{22}d_2+a_{33}d_3)d_1^{-1}x^{-1} & a_{22}\\\
\end{vmatrix}=a_{33}bd_3d_1^{-1}d_2^{-1} ,\\
\begin{vmatrix}
a_{11} & (a_{11}d_1+a_{22}d_2)d_3^{-1}xy\\
(a_{22}d_2+a_{33}d_3)d_1^{-1}x^{-1} & (a_{11}d_1+a_{22}d_2)d_3^{-1}y\\
\end{vmatrix}=(a_{11}d_1+a_{22}d_2)byd_3^{-1}d_1^{-1},\\
\begin{vmatrix}
(a_{11}d_1+a_{33}d_3)d_2^{-1}x& (a_{11}d_1+a_{22}d_2)d_3^{-1}xy\\
a_{22} & (a_{11}d_1+a_{22}d_2)d_3^{-1}y\\
\end{vmatrix}=(a_{11}d_1+a_{22}d_2)bxyd_2^{-1}d_3^{-1},\\
\begin{vmatrix}
a_{11} & (a_{11}d_1+a_{33}d_3)d_2^{-1}x\\
(a_{22}d_2+a_{33}d_3)d_1^{-1}(xy)^{-1} & (a_{11}d_1+a_{33}d_3)d_2^{-1}y^{-1} \\
\end{vmatrix}=(a_{11}d_1+a_{33}d_3)by^{-1}d_1^{-1}d_2^{-1},\\
\begin{vmatrix}
a_{11} & (a_{11}d_1+a_{22}d_2)d_3^{-1}xy\\
(a_{22}d_2+a_{33}d_3)d_1^{-1}(xy)^{-1}& a_{33}\\
\end{vmatrix}=a_{22}bd_2d_1^{-1}d_3^{-1},\\
\begin{vmatrix}
(a_{11}d_1+a_{33}d_3)d_2^{-1}x& (a_{11}d_1+a_{22}d_2)d_3^{-1}xy\\
(a_{11}d_1+a_{33}d_3)d_2^{-1}y^{-1} & a_{33}\\
\end{vmatrix}=(a_{11}d_1+a_{33}d_3)bxd_2^{-1}d_3^{-1},\\
\begin{vmatrix}
(a_{22}d_2+a_{33}d_3)d_1^{-1}x^{-1} & a_{22} \\
(a_{22}d_2+a_{33}d_3)d_1^{-1}(xy)^{-1} & (a_{11}d_1+a_{33}d_3)d_2^{-1}y^{-1} &\\
\end{vmatrix}=(a_{22}d_2+a_{33}d_3)bx^{-1}y^{-1}d_1^{-1}d_2^{-1}, \\
\begin{vmatrix}
(a_{22}d_2+a_{33}d_3)d_1^{-1}x^{-1} & (a_{11}d_1+a_{22}d_2)d_3^{-1}y\\
(a_{22}d_2+a_{33}d_3)d_1^{-1}(xy)^{-1} & a_{33}\\
\end{vmatrix}=(a_{22}d_2+a_{33}d_3)bxd_1^{-1}d_3^{-1} ,\\
\begin{vmatrix}
a_{22} & (a_{11}d_1+a_{22}d_2)d_3^{-1}y\\ (a_{11}d_1+a_{33}d_3)d_2^{-1}y^{-1} & a_{33}\\\end{vmatrix}=a_{11}bd_1d_2^{-1}d_3^{-1}$.

Since $A$ is an MDS matrix and $d_1, d_2, d_3, x, y, b$ are non-zero elements of the finite field, then the condition holds.

Conversely, let  $a_{11}d_1+a_{22} d_2, a_{11}d_1+a_{33}d_3, a_{22}d_2+a_{33}d_3$ and $ a_{11}d_1+a_{22}d_2+a_{33}d_3$ are non-zero elements of the finite field. Then we have all the entries of the matrix $A$ and all $2 \times 2$ sub-matrices have non-zero determinant. Since $ a_{11}d_1+a_{22}d_2+a_{33}d_3 \neq 0,$ 
$\det A$ is also non-zero. Therefore  $A$ is an MDS matrix.
Let $ADA=(M_{ij}), 1 \leq i,j \leq 3$, where $D$= diagonal$(d_1,d_2,d_3)$. Then 
\begin{align*}
M_{11}&=a_{11}^2d_1+(a_{11}d_1+a_{33}d_3)(a_{22}d_2+a_{33}d_3)d_1^{-1}+(a_{11}d_1+a_{22}d_2)(a_{22}d_2+a_{33}d_3)d_1^{-1}\\
&=d_1^{-1}(a_{11}d_1+a_{22}d_2+a_{33}d_3)^2.
\end{align*}
\begin{align*}
M_{22}&=(a_{22}d_2+a_{33}d_3)(a_{22}d_2+a_{33}d_3)d_2^{-1}+a_{22}^2d_2+(a_{11}d_1+a_{22}d_2)(a_{11}d_1+a_{33}d_3)d_2^{-1}\\
&=d_2^{-1}(a_{11}d_1+a_{22}d_2+a_{33}d_3)^2.
\end{align*}
By similar calculations, we have $M_{33}=d_3^{-1}(a_{11}d_1+a_{22}d_2+a_{33}d_3)^2$.
Now, 
\begin{align*}
M_{12}&=a_{11}d_1(a_{11}d_1+a_{33}d_3)d_2^{-1}x+a_{22}(a_{11}d_1+a_{33}d_3)x+(a_{11}d_1+a_{33}d_3)(a_{11}d_1+a_{22}d_2)d_2^{-1}x\\
&=(a_{11}d_1+a_{33}d_3)(a_{11}d_1d_2^{-1}x+a_{22}x+a_{11}d_1d_2^{-1}x+a_{22}x)=0.
\end{align*}
This holds because the characteristic of the finite field is $2$. By similar calculations, all other non-diagonal entries of $ADA$ are also zero. Therefore $ADA=D'$ where $D'=$ diagonal$(d_1^{-1}a,d_2^{-1}a,d_3^{-1}a)$, with $a=(a_{11}d_1+a_{22}d_2+a_{33}d_3)^2$. Since $a$ is a non-zero element of the finite field and $D, D'$ are non-singular matrices, $A$ is semi-involutory. 
\end{proof}
An example illustrating this result is following: 
\begin{example}
Consider the finite field $\mathbb{F}_{2^4}$ with generating polynomial $x^4+x^3+1$. Let $\alpha$ be a primitive element of this field. Choose $a_{11}=1, a_{22}=\alpha, a_{33}=\alpha^2, d_1=\alpha, d_2=\alpha, d_3=\alpha^3+1$. Also set $x=1$ and $y=\alpha$. Then the expressions $a_{11}d_1+a_{22} d_2, a_{11}d_1+a_{33}d_3, a_{22}d_2+a_{33}d_3$ and $ a_{11}d_1+a_{22} d_2+a_{33}d_3$ are non-zero elements. 

Construct the matrix in the form described in Equation (\ref{matrix1}). This yields $$A=\begin{bmatrix}
1 & \alpha^3+\alpha & \alpha^3+\alpha\\
\alpha^3+1 & \alpha & \alpha^3+\alpha\\
\alpha^3 & \alpha^2+1 & \alpha^2
\end{bmatrix}.$$ Also observe that 
 $ADA=D',$ where $D$ is the diagonal matrix with entries $d_1,d_2,d_2$ respectively, and $D'$ is the diagonal matrix with diagonal entries $d_1^{-1}(a_{11}d_1+a_{22}d_2+a_{33}d_3)^2=\alpha^2+\alpha+1, d_2^{-1}(a_{11}d_1+a_{22}d_2+a_{33}d_3)^2=\alpha^2+\alpha+1, d_3^{-1}(a_{11}d_1+a_{22}d_2+a_{33}d_3)^2=\alpha^3+1$, respectively. The matrix $A$ is also an MDS matrix.

This matrix $A$ is irreducible and semi-involutory matrix. It can also be verified that the non-diagonal entries match the form described in Theorem \ref{gen 3x3 inv} with $x=1$ and $y=\alpha$.
\end{example}

\section{A counting problem}\label{cardinality}

In this section, we count the number of semi-involutory MDS matrices over the finite field $\mathbb{F}_{2^m}$. We start with the following construction of a set of $6$-tuples that satisfy the conditions presented in Theorem \ref{SI MDS} over the finite field $\mathbb{F}_{2^m}$: 
$S=\{(a_{11}, a_{22}, a_{33}, d_1, d_2, d_3) \in (\mathbb{F}_{2^m}^*)^6, m \geq 2:  a_{11}d_1 + a_{22}d_2 \neq 0, a_{11}d_1 +a_{33}d_3 \neq 0, a_{22}d_2 + a_{33}d_3 \neq 0, a_{11}d_1 + a_{22}d_2 + a_{33}d_3 \neq 0\}$. Using the cardinality of $S$, we count the number of semi-involutory MDS matrices in Theorem \ref{counting result}. To determine the cardinality of $S$, we partition the set $S$ into five disjoint non-empty subsets. Note that the cardinality of these subsets are comparatively easier to count which are described in the following lemmas. Finally, we add all the cardinalities to get Theorem \ref{counting result}. Here are the counting lemmas: 

\begin{lemma}\label{1st set}
Let $S_1= \{(a_{11}, a_{22}, a_{33}, d_1, d_2, d_3) \in (\mathbb{F}_{2^m}^*)^6, m \geq 2: a_{ii} \neq a_{jj}, 1 \leq i<j \leq 3, a_{11}d_1 + a_{22}d_2 \neq 0, a_{11}d_1 +a_{33}d_3 \neq 0, a_{22}d_2 + a_{33}d_3 \neq 0, a_{11}d_1 + a_{22}d_2 + a_{33}d_3 \neq 0\}$. Then $|S_1|= (2^m-1)^2(2^m-2)^2(2^m-3)(2^m-4)$.
\end{lemma}
\begin{proof}
Let $a_{ii} \neq a_{jj}, 1 \leq i<j \leq 3$. We consider three sub-cases based on the choice of $d_1, d_2, d_3$. As a result, let consider the following three sets:
\begin{align*}
S_1'&=\{(a_{11}, a_{22}, a_{33}, d_1, d_2, d_3) \in (\mathbb{F}_{2^m}^*)^6, m \geq 2: a_{ii} \neq a_{jj}, d_i=d_j, 1 \leq i<j \leq 3 ,\\
& ~~~a_{11}d_1 + a_{22}d_2 \neq 0, a_{11}d_1 +a_{33}d_3 \neq 0, a_{22}d_2 + a_{33}d_3 \neq 0, a_{11}d_1 + a_{22}d_2 + a_{33}d_3 \neq 0\}\\
S_1''&=\{(a_{11}, a_{22}, a_{33}, d_1, d_2, d_3) \in (\mathbb{F}_{2^m}^*)^6, m \geq 2: a_{ii} \neq a_{jj},1 \leq i<j \leq 3,  d_i=d_j ~\text{for}~\\
&~~~(i,j)\in \{(1,2), (1,3),(2,3)\}, a_{11}d_1 + a_{22}d_2 \neq 0, a_{11}d_1 +a_{33}d_3 \neq 0, a_{22}d_2 + a_{33}d_3 \neq 0,\\
&~~~ a_{11}d_1 + a_{22}d_2 + a_{33}d_3 \neq 0\}\\
S_1'''&=\{(a_{11}, a_{22}, a_{33}, d_1, d_2, d_3) \in (\mathbb{F}_{2^m}^*)^6, m \geq 2: a_{ii} \neq a_{jj}, d_i \neq d_j, 1 \leq i<j \leq 3 ,\\
& ~~~a_{11}d_1 + a_{22}d_2 \neq 0, a_{11}d_1 +a_{33}d_3 \neq 0, a_{22}d_2 + a_{33}d_3 \neq 0, a_{11}d_1 + a_{22}d_2 + a_{33}d_3 \neq 0\}
\end{align*} 

\textbf{Case I.} First consider the set $S_1'$. In this case we have $d_i=d_j$ for $1 \leq i<j \leq 3 $.
This implies that $a_{11}d_1+a_{22}d_2, a_{11}d_1+a_{33}d_3, a_{22}d_2+a_{33}d_3$ are non-zero. Since $d_i \in \mathbb{F}_{2^m}^* $, the equation $a_{11}d_1+a_{22} d_2+a_{33}d_3=0$ holds if $a_{11}+a_{22}+a_{33}=0$. 
Consider the sets $A_1=\{(a_{11},a_{22},a_{33}):a_{11}+a_{22}+a_{33}=0\}$ and $A_2=\{(a_{11},a_{22},a_{33}):a_{11}+a_{22}+a_{33} \neq 0\}$. Any non-zero $d_i \in \mathbb{F}_{2^m}$ satisfy all the four conditions of $S_1'$ for each element of $A_2$. Clearly,
\begin{align*}
|A_2| = (2^m-1)(2^m-2)(2^m-3)-(2^m-1)(2^m-2)=(2^m-1)(2^m-2)(2^m-4).
\end{align*}
Hence cardinality of $S_1'$ is \begin{align*}
|S_1'|&=\{(2^m-1)(2^m-2)(2^m-3)-(2^m-1)(2^m-2)\}(2^m-1)\\
&=(2^m-1)^2(2^m-2)(2^m-4).
\end{align*} 

\textbf{Case II.} Next consider the set $S_1''$. Then exactly one pair of $d_i$ is equal.
We give the proof for the case $d_1= d_2,  d_1 \neq d_3, d_2 \neq d_3$. The other two cases will follow similarly.

Let $d_1= d_2,  d_1 \neq d_3, d_2 \neq d_3$. Then $a_{11}d_1+a_{22}d_2$ is non-zero. To determine the cardinality of $S_1''$, we count the cardinality of non-zero $3$-tuple $(d_1,d_1,d_3)$ such that $ a_{11}d_1+a_{33}d_3, a_{22}d_1+a_{33}d_3, a_{11}d_1+a_{22}d_1+a_{33}d_3 $ are non-zero.

First we choose an arbitrary  $(a_{11},a_{22},a_{33})$ such that $a_{33} \neq a_{11}+a_{22}$ and fix it. There are $(2^m-1)$ and $(2^m-2)$ ways to choose $a_{11}$ and $a_{22}$ respectively. Since $a_{33}$ is different from $a_{11}+a_{22}$, we have $(2^m-4)$ many options for $a_{33}$. Therefore $(a_{11},a_{22},a_{33})$ can be chosen in total $(2^m-1)(2^m-2)(2^m-4)$ ways. Let define the following two sets:
\begin{align*}
T &=\{(d_1,d_3) \neq (0,0): d_1 \neq d_3\}\\
\text{and}~ X &=\{(d_1,d_3)\in T:a_{11}d_1+a_{33}d_3 \neq 0,a_{22}d_1+a_{33}d_3 \neq 0, a_{11}d_1+a_{22}d_1+a_{33}d_3 \neq 0 \}.
\end{align*} 
Clearly, $|T|=(2^m-1)(2^m-2)$.  
To count the cardinality of $X$, consider three sets 
\begin{align*}
X_1&=\{(d_1,d_3) \in T : a_{11}d_1+a_{33}d_3 = 0\},\\
X_2&=\{(d_1,d_3) \in T : a_{22}d_1+a_{33}d_3 = 0\} \\
\text{and}~~X_3 &=\{(d_1,d_3) \in T :( a_{11}+a_{22})d_1+a_{33}d_3 = 0\}
\end{align*}
and observe that $|X|=|T|\setminus |X_1 \cup X_2 \cup X_3|$. To calculate the cardinality of $X_i, 1 \leq i \leq 3$, we construct three sets $Y_1, Y_2$ and $Y_3$ defined as follows: 
\begin{align*}
Y_1&=\{(xa_{33},xa_{11}): x \in \mathbb{F}_{2^m}^*\},\\
Y_2&=\{(ya_{33},ya_{22}): y \in \mathbb{F}_{2^m}^*\}\\
\text{and}~ Y_3&=\{(za_{33},z(a_{11}+a_{22}): z \in \mathbb{F}_{2^m}^*\}.
\end{align*} 
We prove that $X_i=Y_i, 1 \leq i \leq 3$. First observe that, $Y_1 \subseteq X_1,Y_2\subseteq X_2$ and $Y_3\subseteq X_3$.

To prove the other side of inclusion, i.e., $X_1 \subseteq Y_1$, consider an arbitrary element $(\alpha_1,\alpha_2 ) \in X_1$. Therefore $a_{11}\alpha_1+a_{33}\alpha_2 = 0$. Moreover, there exist non-zero elements $\beta_1, \beta_2$ in $\mathbb{F}_{2^m}^*$ such that $\alpha_1=\beta_1a_{33}$ and $\alpha_2=\beta_2a_{11}$. Hence $a_{11}a_{33}(\beta_1+\beta_2)=0 $ implies $\beta_1=\beta_2$. Thus $(\alpha_1,\alpha_2 ) \in Y_1$ and this implies $X_1=Y_1$. Similarly, $Y_2=X_2 $ and $Y_3=X_3$ and $|X_1|= |X_2|= |X_3|=(2^m-1)$.

Next assume that  $X_1 \cap X_2 \neq \phi$. This implies $a_{11}=a_{22}$, which is not possible. Similarly, if $X_1 \cap X_3 \neq \phi$, that implies $a_{11}=a_{11}+a_{22}$, which is not possible. Also, if $X_2 \cap X_3 \neq \phi$, then $a_{22}=a_{11}+a_{22}$, which is not possible.
Therefore,
\begin{align*}
|X_1\cup X_2\cup X_3| &=3(2^m-1)\\\text{and}
~|X| =|T| \setminus |X_1\cup X_2\cup X_3|&= (2^m-1)(2^m-2)-3(2^m-1)=(2^m-1)(2^m-5).
\end{align*}

For the case $a_{33}= a_{11}+a_{22}$, first we choose arbitrary $a_{11},a_{22}$ and fix them. 
Then $a_{11}d_1+a_{22}d_1+a_{33}d_3=(d_1+d_3)a_{33} \neq 0$. Hence, in this situation the set $X$ is defined as $X=\{(d_1,d_3)\in T:a_{11}d_1+a_{33}d_3 \neq 0,a_{22}d_1+a_{33}d_3 \neq 0 \}$. Following a similar approach as in previous case, construct the sets $X_1,X_2,Y_1,Y_2$ and count $|X|$. Then
\begin{align*}
|X|=(2^m-1)(2^m-2)-2(2^m-1)=(2^m-1)(2^m-4).
\end{align*}
Combining both cases for $d_1= d_2, d_1 \neq d_3, d_2 \neq d_3$ and considering the remaining two cases for $d_i$'s, we have
\begin{align*}
|S_1''|&=3\{(2^m-1)^2(2^m-2)(2^m-4)(2^m-5)+(2^m-1)^2(2^m-2)(2^m-4)\}\\&=3(2^m-1)^2(2^m-2)(2^m-4)^2.
\end{align*}

\textbf{Case III.} Lastly, consider $S_1'''$.   

First choose an arbitrary triplet $(a_{11},a_{22},a_{33})$ and fix it. First we define the following two sets: 
\begin{align*}
T &=\{(d_1,d_2,d_3) \neq (0,0,0): d_i \neq d_j , 1 \leq i<j \leq 3\}\\
X&=\{(d_1,d_2,d_3) \in T: a_{11}d_1 + a_{22}d_2 \neq 0, a_{11}d_1 + a_{33}d_3 \neq 0, a_{22}d_2 + a_{33}d_3 \neq 0,\\
& a_{11}d_1 + a_{22}d_2+ a_{33}d_3 \neq 0\}
\end{align*}
Clearly, $|T|= (2^m-1)(2^m-2)(2^m-3)$.

Consider the sets 
\begin{align*}
X_1 &=\{(d_1,d_2,d_3) \in T: a_{11}d_1 + a_{22}d_2 = 0\},\\
X_2 &=\{(d_1,d_2,d_3) \in T: a_{11}d_1 + a_{33}d_3 = 0\},\\
X_3 &=\{(d_1,d_2,d_3) \in T: a_{22}d_2 + a_{33}d_3 = 0\},\\
\text{and}~~X_4 &=\{(d_1,d_2,d_3) \in T: a_{11}d_1 + a_{22}d_2+a_{33}d_3 = 0\}. 
\end{align*}
First we calculate $|X_1\cup X_2\cup X_3\cup X_4|$. We begin with defining the following sets
\begin{align*}
Y_1 &=\{(xa_{22},xa_{11},d_3): x \in \mathbb{F}_{2^m}^*, d_3 \neq (0, xa_{22}, xa_{11})\},\\
Y_2 &=\{(ya_{33},d_2,ya_{11}): y \in \mathbb{F}_{2^m}^*,d_2 \neq ( 0, ya_{33}, ya_{11})\},\\
Y_3 &=\{(d_1,za_{33},za_{22}): d_1 \in \mathbb{F}_{2^m}^*, z \in \mathbb{F}_{2^m}^*, za_{33}\neq d_1 , za_{11} \neq d_1\}\\
&=\{(d_1,za_{33},za_{22}):d_1 \in \mathbb{F}_{2^m}^*, z \in \mathbb{F}_{2^m}^*,z \neq  (d_1a_{33}^{-1}, d_1a_{22}^{-1} )\},\\
\text{and}~~ Y_4 &=\{(d_1,d_2,d_3) : d_1 \in \mathbb{F}_{2^m}^*, d_2 \neq (0,  d_1, a_{22}^{-1}a_{11}d_1, (a_{11}+a_{33})a_{22}^{-1}d_1, (a_{22}+a_{33})^{-1}a_{11}d_1),\\
& d_3 \neq 0, d_3=a_{33}^{-1}(a_{11}d_1 + a_{22}d_2)\}.
\end{align*} We prove that $X_i=Y_i, 1 \leq i \leq 4$.
For the case $a_{33}=a_{11}+a_{22}$, $Y_4$ is denoted by 
\begin{align*}
Y_4^0 &=\{(d_1,d_2,d_3):d_1 \in \mathbb{F}_{2^m}^*, d_2 \neq (0, d_1, a_{22}^{-1}a_{11}d_1), d_3 \neq 0, d_3=a_{33}^{-1}(a_{11}d_1 + a_{22}d_2)\}.
\end{align*}
The cardinality of $Y_4^0$ and $Y_4$ are  $|Y_4^0|=(2^m-1)(2^m-3)$  
and $|Y_4|=(2^m-1)(2^m-5)$.

Clearly, $Y_1\subseteq X_1, Y_2\subseteq X_2$ and $ Y_3\subseteq X_3$. We now prove the reverse inclusion starting with the case $X_1\subseteq Y_1$, and the other two cases are similar.

Let $(d_1',d_2',d_3') \in X_1.$ From the construction of $X_1$ this implies $a_{11}d_1' + a_{22}d_2' = 0$
and $d_3' \neq \{d_1', d_2'\}$. There exists non-zero elements $\beta_1,\beta_2 \in \mathbb{F}_{2^m}^*$ such that $d_1'=\beta_1a_{22}, d_2'=\beta_2a_{11}$. Since $a_{11}d_1' + a_{22}d_2' = 0$, we have $a_{11}a_{22}(\beta_1+\beta_2)=0$. Since $a_{11},a_{22}$ are non-zero element, this implies $\beta_1=\beta_2$. Therefore $d_3' \neq \{ \beta_1a_{22}, \beta_1a_{11}\}$ and $X_1=Y_1$. Hence, cardinality of the set $Y_1$ is $(2^m-1)(2^m-3)$ because, we need to choose $x$ from $\mathbb{F}_{2^m}^*$ and $d_3$ from $\mathbb{F}_{2^m}^*\setminus \{0, xa_{22}, xa_{11}\}$. 

 Likewise, $X_2=Y_2$ and $ X_3=Y_3$. Hence $|X_1|=|X_2|=|X_3|=(2^m-1)(2^m-3)$.

We now prove that $Y_4=X_4$ for the case $a_{33} \neq a_{11}+a_{22}$.
Let $(d_1',d_2',d_3')\in Y_4$. Then $ d_3'=a_{33}^{-1}(a_{11}d_1' + a_{22}d_2')$ and this implies $ a_{11}d_1' + a_{22}d_2'+a_{33}d_3'=0 $. Also $d_3' \neq 0$ implies $a_{11}d_1' \neq a_{22}d_2'$ i.e., $d_2' \neq a_{22}^{-1}a_{11}d_1'$. 

To prove $(d_1',d_2',d_3')\in X_4$, we need to show $d_3' \neq \{d_1',d_2'\}$. If $d_3'= d_1'$ then $d_1' = a_{33}^{-1}(a_{11}d_1' + a_{22}d_2')$, which implies $d_2'=(a_{11}+a_{33})a_{22}^{-1}d_1'$, which is not possible from the construction of $Y_4$. Similarly, if $d_3'=d_2',$ then $d_2' = a_{33}^{-1}(a_{11}d_1' + a_{22}d_2')$, which implies  $d_2'=(a_{22}+a_{33})^{-1}a_{11}d_1'$, which is also not possible from the construction of $Y_4$. Therefore, $(d_1',d_2',d_3') \in T$ and $Y_4 \subseteq X_4$.

Conversely , let $(d_1',d_2',d_3')\in X_4$. Then $d_1' \neq d_2', d_1'\neq d_3', d_1' \neq d_2'$ and $a_{11}d_1' + a_{22}d_2' + a_{33}d_3'=0$. Since $ a_{33}d_3' \neq 0$, we have $a_{11}d_1' \neq a_{22}d_2'$ and this implies $d_2' \neq a_{22}^{-1}a_{11}d_1'$. Furthermore, since $d_3'\neq d_1', d_2'$, it follows that $a_{11}d_1' + a_{22}d_2' \neq a_{33}d_1'$ and $a_{11}d_1' + a_{22}d_2' \neq a_{33}d_2'$. This implies $d_2' \neq (a_{11}+a_{33})a_{22}^{-1}d_1'$ and $d_2' \neq (a_{22}+a_{33})^{-1}a_{11}d_1'$ respectively. Therefore $(d_1',d_2',d_3')\in Y_4.$

Similarly it can be proved that $Y_4^0=X_4$ when $a_{33}=a_{11}+a_{22}$.

Next, we calculate the cardinality of $ X_i\cap X_j, 1 \leq i,j \leq 4$. We start with calculating $X_1\cap X_2$ and $X_1\cap X_4$, and the others cases are similar.

Let $(d_1',d_2',d_3')\in X_1\cap X_2$. This implies $d_1'=xa_{22}=ya_{33}, d_2'=xa_{11}, d_3'=ya_{11}$.
Since $a_{22} \neq a_{33}$, and $xa_{22}=ya_{33}=d_1'$ then $x=d_1'a_{22}^{-1}$ and $y=d_1'a_{33}^{-1}$. Therefore $\{(d_1',d_1'a_{22}^{-1}a_{11},d_1'a_{33}^{-1}a_{11}): d_1' \in \mathbb{F}_{2^m}^*\} = X_1 \cap X_2$ . Then $|X_1 \cap X_2|=(2^m-1)$. Similar to this, $|X_1\cap X_3|= |X_2\cap X_3|=(2^m-1)$.

Let $(d_1',d_2',d_3')\in X_1\cap X_4$. Then $a_{11}d_1' + a_{22}d_2' = 0$ and $a_{11}d_1' + a_{22}d_2'+a_{33}d_3' = 0$. This implies $a_{33}d_3'=0$, which is not possible. Thus $|X_1 \cap X_4|= \phi$. Similarly, $ |X_2 \cap X_4|= |X_3\cap X_4| = \phi$. 

Lastly, we calculate the cardinality of the set $X_1 \cap X_2 \cap X_3$. For that, we prove that $X_1 \cap X_2 \cap X_3=X_1 \cap X_2 $.
 
Let $(d_1',d_2',d_3') \in X_1\cap X_2$. then $a_{11}d_1' + a_{22}d_2' = 0$ and $a_{11}d_1' + a_{33}d_3' =0$. Adding these, we have $a_{22}d_2' +a_{33}d_3'=0$. Thus $(d_1',d_2',d_3') \in X_1\cap X_2\cap X_3$. Other side of the inclusion follow easily. Thus $|X_1 \cap X_2 \cap X_3|=(2^m-1)$.

Therefore $|X_1 \cup X_2 \cup X_3 \cup X_4|=\left.
  \begin{cases}
  2(2^m-1)(2\cdot2^m-7),  & \text{for } a_{11}+a_{22}=a_{33}\\
  (2^m-1)(4\cdot2^m-16) ,  & \text{for } a_{11}+a_{22}\neq a_{33}
   \end{cases}
   \right. $
and

$|X|=\left.
  \begin{cases}
  (2^m-1)(2^{2m}-9.2^m+20),  & \text{for } a_{11}+a_{22}=a_{33}\\
  (2^m-1)(2^{2m}-9.2^m+22),  & \text{for } a_{11}+a_{22}\neq a_{33}.
  \end{cases}
   \right. $

Cardinality of $S_1'''$ in this case is $(2^m-1)^2(2^m-2)\{(2^m-4)(2^{2m}-9.2^m+22)+(2^{2m}-9.2^m+20)\}$. 
Since any two of $S_1',S_1''$ and $S_1'''$ have empty intersection, 
cardinality of $S_1$ is: 
\begin{align*}
|S_1|&=(2^m-1)^2(2^m-2)(2^m-4)+3(2^m-1)^2(2^m-2)(2^m-4)^2+(2^m-1)^2(2^m-2)\\
&~~~\cdot\{(2^m-4)(2^{2m}-9\cdot2^m+22)+(2^{2m}-9\cdot2^m+20)\}\\
&=(2^m-1)^2(2^m-2)\{(2^m-4)+3(2^m-4)^2+(2^m-4)(2^{2m}-9\cdot2^m+22)\\&~~~~~+(2^{2m}-9\cdot2^m+20)\}\\
&=(2^m-1)^2(2^m-2)\{(2^m-4)(2^{2m}-6\cdot2^m+11)+(2^{2m}-9\cdot2^m+20)\}\\
&=(2^m-1)^2(2^m-2)\{(2^m-4)(2^{2m}-6\cdot2^m+11)+(2^m-4)(2^m-5)\}\\
&=(2^m-1)^2(2^m-2)(2^m-4)(2^{2m}-5\cdot2^m+6)\\
&=(2^m-1)^2(2^m-2)^2(2^m-3)(2^m-4).
\end{align*}
\end{proof}

In the next lemma, we consider another condition on the $a_{ii}$'s and determine the cardinality of the set $S_2$ derived from $S$.

\begin{lemma}\label{2nd set}
Let $S_2= \{(a_{11}, a_{22}, a_{33}, d_1, d_2, d_3) \in (\mathbb{F}_{2^m}^*)^6, m \geq 2:~ a_{ii} = a_{jj}, 1 \leq i<j \leq 3, a_{11}d_1 + a_{22}d_2 \neq 0, a_{11}d_1 +a_{33}d_3 \neq 0, a_{22}d_2 + a_{33}d_3 \neq 0, a_{11}d_1 + a_{22}d_2 + a_{33}d_3 \neq 0\}$. Then $|S_2|=(2^m-1)^2(2^m-2)(2^m-4)$.
\end{lemma}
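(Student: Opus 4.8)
The plan is to exploit the characteristic-$2$ structure to collapse all four non-vanishing conditions into conditions purely on $d_1, d_2, d_3$. First I would observe that the hypothesis $a_{ii} = a_{jj}$ for all $1 \le i < j \le 3$ forces $a_{11} = a_{22} = a_{33}$; call this common value $a \in \mathbb{F}_{2^m}^*$. Since $\mathbb{F}_{2^m}$ has characteristic $2$, each pairwise sum becomes $a_{ii}d_i + a_{jj}d_j = a(d_i + d_j)$, which is non-zero precisely when $d_i \neq d_j$, and the triple sum becomes $a(d_1 + d_2 + d_3)$, non-zero precisely when $d_1 + d_2 + d_3 \neq 0$. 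Thus the defining conditions of $S_2$ reduce to: $a \in \mathbb{F}_{2^m}^*$ arbitrary, together with $(d_1, d_2, d_3)$ a triple of non-zero, pairwise distinct elements satisfying $d_1 + d_2 + d_3 \neq 0$.

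Next I would count the admissible triples $(d_1, d_2, d_3)$. The number of triples of pairwise distinct non-zero elements is $(2^m - 1)(2^m - 2)(2^m - 3)$, so it remains to subtract those whose coordinates sum to zero. Here I would use that in characteristic $2$ the equation $d_1 + d_2 + d_3 = 0$ is equivalent to $d_3 = d_1 + d_2$. The key point is that once $d_1, d_2$ are chosen distinct and non-zero, the forced value $d_3 = d_1 + d_2$ is automatically non-zero (as $d_1 \neq d_2$) and automatically distinct from both $d_1$ and $d_2$ (as $d_2, d_1 \neq 0$); hence every such pair $(d_1, d_2)$ yields exactly one triple summing to zero, giving $(2^m - 1)(2^m - 2)$ of them. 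Subtracting, the number of admissible triples is $(2^m - 1)(2^m - 2)(2^m - 3) - (2^m - 1)(2^m - 2) = (2^m - 1)(2^m - 2)(2^m - 4)$.

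Finally, since $a$ ranges freely over the $(2^m - 1)$ non-zero elements independently of the choice of $(d_1, d_2, d_3)$, I would multiply to obtain $|S_2| = (2^m - 1) \cdot (2^m - 1)(2^m - 2)(2^m - 4) = (2^m - 1)^2(2^m - 2)(2^m - 4)$. The computation is essentially routine; the only place demanding care is verifying that the parametrization $d_3 = d_1 + d_2$ preserves both non-vanishing and pairwise distinctness, which is exactly where the characteristic-$2$ hypothesis does the work and spares us the overlapping inclusion--exclusion over bad events needed in Lemma \ref{1st set}.
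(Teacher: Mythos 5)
Your proposal is correct and follows essentially the same route as the paper: reduce all four conditions via $a_{11}=a_{22}=a_{33}=a$ to pairwise distinctness of the $d_i$ plus $d_1+d_2+d_3\neq 0$, count the triples as $(2^m-1)(2^m-2)(2^m-3)-(2^m-1)(2^m-2)$, and multiply by the $(2^m-1)$ choices of $a$. Your explicit check that $d_3=d_1+d_2$ is automatically non-zero and distinct from $d_1,d_2$ is a small extra verification the paper leaves implicit, but the argument is the same.
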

\begin{proof}
Assume that $a_{ii} = a_{jj}, 1 \leq i<j \leq 3$. In this case, the only possible choice for $d_i$'s are $d_i \neq d_j, 1 \leq i<j \leq 3$.

As a result $a_{11}d_1+a_{22}d_2, a_{11}d_1+a_{33}d_3, a_{22}d_2+a_{33}d_3$ are never zero. Additionally, the elements in the set $S_2$ satisfy the condition $a_{11}d_1+a_{22}d_2+a_{33}d_3=a_{11}(d_1+d_2+d_3) \neq 0$. Since $a_{11}$ in non-zero, we need $d_1+d_2+d_3$ cannot be zero i.e., $d_3 \neq d_1+d_2$. Then for the triplet $(d_1,d_2,d_3)$, we have
\begin{align*}
(2^m-1)(2^m-2)(2^m-3)-(2^m-1)(2^m-2)=(2^m-1)(2^m-2)(2^m-4)
\end{align*} many choices. Therefore cardinality of $S_2$ is $(2^m-1)^2(2^m-2)(2^m-4)$. 
\end{proof}

For the last case, let us assume at most one pair of $a_{ii}$'s are equal i.e., $a_{ii}=a_{jj}$ for $(i,j) \in \{(1,2),(1,3),(2,3)\}$. We prove for the case $a_{11}=a_{22}, a_{11} \neq a_{33}$ and the other two cases follow similarly. 

\begin{lemma}\label{3rd set}
Let $S_3= \{(a_{11}, a_{22}, a_{33}, d_1, d_2, d_3) \in (\mathbb{F}_{2^m}^*)^6, m \geq 2: a_{11}=a_{22}, a_{11} \neq a_{33}, a_{11}d_1 + a_{22}d_2 \neq 0, a_{11}d_1 +a_{33}d_3 \neq 0, a_{22}d_2 + a_{33}d_3 \neq 0, a_{11}d_1 + a_{22}d_2 + a_{33}d_3 \neq 0\}$. Then $|S_3|=(2^m-1)^2(2^m-2)(2^{2m}-6 \cdot2^m+8)$.
\end{lemma}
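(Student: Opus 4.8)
The plan is to peel off the diagonal triple first and then count the admissible scaling triple by an inclusion–exclusion argument in the spirit of Lemma \ref{1st set}, exploiting the collapse $a_{11}d_1+a_{22}d_2=a_{11}(d_1+d_2)$ that occurs once $a_{11}=a_{22}$. First I would count the triples $(a_{11},a_{22},a_{33})$ satisfying $a_{11}=a_{22}\neq a_{33}$ with all entries nonzero: choosing $a_{11}\in\mathbb{F}_{2^m}^*$ fixes $a_{22}$, and $a_{33}$ then ranges over $\mathbb{F}_{2^m}^*\setminus\{a_{11}\}$, giving $(2^m-1)(2^m-2)$ such triples. Writing $a=a_{11}=a_{22}$ and $c=a_{33}$, the four defining conditions reduce, in characteristic $2$, to $d_1\neq d_2$ (from $a(d_1+d_2)\neq 0$), together with $d_1\neq w$, $d_2\neq w$, and $d_1+d_2\neq w$, where $w:=a^{-1}c\,d_3$ is a nonzero element for every nonzero $d_3$. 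Because these reduced conditions on $(d_1,d_2)$ depend on the diagonal triple only through the single nonzero quantity $w$, the count of admissible $(d_1,d_2)$ will be independent of the particular diagonal triple; this is what will let me treat the two blocks of coordinates independently.

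Next I would fix $d_3\in\mathbb{F}_{2^m}^*$, contributing a factor $2^m-1$, and count the pairs $(d_1,d_2)\in(\mathbb{F}_{2^m}^*)^2$ avoiding the four forbidden coincidences. Following the bookkeeping of the earlier lemmas, I would set $T=(\mathbb{F}_{2^m}^*)^2$ with $|T|=(2^m-1)^2$ and introduce the four bad sets $X_1=\{d_1=d_2\}$, $X_2=\{d_1=w\}$, $X_3=\{d_2=w\}$, $X_4=\{d_1+d_2=w\}$, and then evaluate $|T\setminus(X_1\cup X_2\cup X_3\cup X_4)|$ by the inclusion–exclusion principle. The singleton cardinalities are $|X_1|=|X_2|=|X_3|=2^m-1$ and $|X_4|=2^m-2$, the last because $d_1=w$ must be excluded to keep $d_2=d_1+w$ nonzero. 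The key feature of characteristic $2$ is that most intersections vanish: $X_1\cap X_4=\emptyset$ since $d_1=d_2$ forces $d_1+d_2=0\neq w$, and similarly $X_2\cap X_4=X_3\cap X_4=\emptyset$ because each forces a zero coordinate, so the only surviving intersections are the singletons $\{(w,w)\}$, giving $|X_1\cap X_2|=|X_1\cap X_3|=|X_2\cap X_3|=|X_1\cap X_2\cap X_3|=1$. Inclusion–exclusion then yields $|X_1\cup X_2\cup X_3\cup X_4|=(4\cdot 2^m-5)-3+1=4\cdot 2^m-7$, so the number of admissible pairs is $(2^m-1)^2-(4\cdot 2^m-7)=2^{2m}-6\cdot 2^m+8$.

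Finally I would assemble the three factors: the $(2^m-1)(2^m-2)$ diagonal triples, the $2^m-1$ choices of $d_3$, and the $2^{2m}-6\cdot 2^m+8$ admissible pairs $(d_1,d_2)$ for each $d_3$, producing $|S_3|=(2^m-1)^2(2^m-2)(2^{2m}-6\cdot 2^m+8)$, exactly as claimed. The main obstacle I anticipate is not conceptual but organizational: I must apply the characteristic-$2$ degeneracy $d_1=d_2\Rightarrow d_1+d_2=0$ consistently so that precisely the intersections $X_i\cap X_4$ collapse to the empty set while the three pairwise intersections among $X_1,X_2,X_3$ each reduce to the single point $(w,w)$; getting any of these wrong would shift the middle coefficient and spoil the factorization $2^{2m}-6\cdot 2^m+8=(2^m-2)(2^m-4)$. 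The independence of the $(d_1,d_2)$-count from the fixed diagonal triple, which legitimizes multiplying the blocks, follows because $w$ is an arbitrary nonzero element as $d_3$ ranges over $\mathbb{F}_{2^m}^*$.
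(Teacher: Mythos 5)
Your count is correct --- I checked the inclusion--exclusion: with $q=2^m$ the bad sets have sizes $3(q-1)+(q-2)=4q-5$, the three pairwise intersections among $X_1,X_2,X_3$ and their triple intersection all equal the single point $(w,w)$, every intersection involving $X_4$ is empty in characteristic $2$, so the union has $4q-7$ elements and the admissible pairs number $(q-1)^2-(4q-7)=q^2-6q+8$, giving $|S_3|=(q-1)^2(q-2)(q^2-6q+8)$ as claimed. Your route, however, is genuinely different from the paper's. The paper partitions $S_3$ into four subsets $S_3',S_3'',S_3''',S_3''''$ according to which of $d_1,d_2,d_3$ coincide, shows the case $d_1=d_2$ is vacuous, computes the two ``one coincidence'' cases to be $(2^m-1)^2(2^m-2)(2^m-4)$ each via a two-variable inclusion--exclusion, handles the ``all distinct'' case with a four-set inclusion--exclusion inside the set $T$ of distinct triples (which requires the auxiliary parametrizations $Y_i=X_i$ and several emptiness checks for intersections), and then sums $2(2^m-4)+(2^m-4)^2=2^{2m}-6\cdot 2^m+8$. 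You instead avoid the case split entirely by normalizing: after fixing the diagonal triple and $d_3$, all four conditions become conditions on $(d_1,d_2)$ relative to the single nonzero parameter $w=a_{11}^{-1}a_{33}d_3$, and one inclusion--exclusion over all of $(\mathbb{F}_{2^m}^*)^2$ finishes the count uniformly. What the paper's decomposition buys is structural parallelism with Lemma \ref{1st set} (the same $T$, $X_i$, $Y_i$ machinery is reused across all three lemmas); what yours buys is brevity, fewer intersection checks, no need for the $X_i=Y_i$ parametrization arguments, and a transparent explanation of the factorization $2^{2m}-6\cdot 2^m+8=(2^m-2)(2^m-4)$. Your justification for multiplying the blocks --- that the $(d_1,d_2)$-count depends on the data only through the fact that $w\neq 0$ --- is exactly the point that makes the shortcut legitimate.
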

\begin{proof}
Let $a_{11}=a_{22}, a_{11} \neq a_{33}, a_{22} \neq a_{33}$. We study each sub-cases of $d_i$ separately. Let
\begin{align*}
S_3' &=\{(a_{11}, a_{22}, a_{33}, d_1, d_2, d_3) \in (\mathbb{F}_{2^m}^*)^6, m \geq 2: a_{11}=a_{22}, a_{11} \neq a_{33}, d_1 = d_2, d_2 \neq d_3,\\
&~~ d_1 \neq d_3, a_{11}d_1 + a_{22}d_2 \neq 0, a_{11}d_1 +a_{33}d_3 \neq 0, a_{22}d_2 + a_{33}d_3 \neq 0,\\
&~~ a_{11}d_1 + a_{22}d_2 + a_{33}d_3 \neq 0\},\\
S_3'' &=\{(a_{11}, a_{22}, a_{33}, d_1, d_2, d_3) \in (\mathbb{F}_{2^m}^*)^6, m \geq 2: a_{11}=a_{22}, a_{11} \neq a_{33}, d_1\neq d_2, d_2 =d_3, \\
&~~d_1 \neq d_3, a_{11}d_1 + a_{22}d_2 \neq 0, a_{11}d_1 +a_{33}d_3 \neq 0, a_{22}d_2 + a_{33}d_3 \neq 0,\\
&~~ a_{11}d_1 + a_{22}d_2 + a_{33}d_3 \neq 0\},\\
S_3''' &=\{(a_{11}, a_{22}, a_{33}, d_1, d_2, d_3) \in (\mathbb{F}_{2^m}^*)^6, m \geq 2: a_{11}=a_{22}, a_{11} \neq a_{33},   d_1\neq d_2, d_2 \neq d_3,\\ 
&~~ d_1 = d_3, a_{11}d_1 + a_{22}d_2 \neq 0, a_{11}d_1 +a_{33}d_3 \neq 0, a_{22}d_2 + a_{33}d_3 \neq 0,\\
&~~ a_{11}d_1 + a_{22}d_2 + a_{33}d_3 \neq 0\},\\
S_3'''' &=\{(a_{11}, a_{22}, a_{33}, d_1, d_2, d_3) \in (\mathbb{F}_{2^m}^*)^6, m \geq 2: a_{11}=a_{22}, a_{11} \neq a_{33},  d_1 \neq d_2, d_2 \neq d_3,\\
&~~ d_1 \neq d_3, a_{11}d_1 + a_{22}d_2 \neq 0, a_{11}d_1 +a_{33}d_3 \neq 0, a_{22}d_2 + a_{33}d_3 \neq 0,\\
&~~ a_{11}d_1 + a_{22}d_2 + a_{33}d_3 \neq 0\}.
\end{align*}

\textbf{Case I.} Consider the set $S_3'$. In this set, we have $d_1 = d_2, d_2 \neq d_3, d_1 \neq d_3$. Therefore $a_{11}d_1+a_{22}d_2$ is always zero and this case will never occur.

\textbf{Case II.} Consider the set $S_3''$. Then $d_1 \neq d_2, d_2=d_3, d_1 \neq d_3$.

For each value of $d_i$'s with $1 \leq i \leq 3$, both $a_{22}d_2+a_{33}d_3$ and $a_{11}d_1+a_{22}d_2$ are always non-zero in this case.

Furthermore, the elements of the set $S_3''$ satisfy $ a_{11}d_1+a_{33}d_3, a_{11}d_1+a_{22}d_2+a_{33}d_3 $ are non-zero.
To find the cardinality of $S_3''$, we first fix an arbitrary $3$-tuple $(a_{11},a_{22},a_{33}) \in (\mathbb{F}_{2^m}^*)^3$ with $a_{11}=a_{22}$. Let us define the following sets: 
\begin{align*}
T &=\{(d_1,d_3) \neq (0,0): d_1 \neq d_3\}\\
\text{and}~~ X &=\{(d_1,d_3)\in T:a_{11}d_1+a_{33}d_3 \neq 0, a_{11}d_1+a_{22}d_2+a_{33}d_3=a_{11}d_1+a_{11}d_3+a_{33}d_3 \neq 0 \}
\end{align*}
Clearly, $|T|=(2^m-1)(2^m-2)$. 
To determine the cardinality of $X$, we first count the cardinality of the following sets.
Let 
\begin{align*}
X_1 &=\{(d_1,d_3) \in T: a_{11}d_1+a_{33}d_3 = 0\},\\
X_2 &=\{(d_1,d_3) \in T:a_{11}d_1+(a_{11}+a_{33})d_3 = 0\},\\
\text{and}~~ Y_1 &=\{(xa_{33},xa_{11}): x \in \mathbb{F}_{2^m}^*\},\\
Y_2 &=\{(z(a_{11}+a_{33}),z(a_{11}): z \in \mathbb{F}_{2^m}^*\}.
\end{align*}
Our claim is that $X_i=Y_i$ for $i=1,2$. One side of the inclusion is evident, i.e., $Y_1 \subseteq X_1,Y_2 \subseteq X_2$.  For the converse part,
consider an arbitrary element $(\alpha_1,\alpha_2 ) \in X_1$. Then $a_{11}\alpha_1+a_{33}\alpha_2 = 0$. There exists non-zero elements $\beta_1, \beta_2$ over $\mathbb{F}_{2^m}^*$ such that $\alpha_1=\beta_1a_{33}$ and $\alpha_2=\beta_2a_{11}$. Consequently, we get $a_{11}a_{33}(\beta_1+\beta_2)=0$ which implies $\beta_1=\beta_2$. Thus $X_1=Y_1$. Similarly, $X_2=Y_2$ and $|X_1|=|X_2|=(2^m-1)$. 

Next we calculate $|X_1\cup X_2|$.
If $X_1 \cap X_2 \neq \phi$  then $a_{33}=a_{11}+a_{33}$, which is contradiction to $a_{11} \neq 0$. Therefore
\begin{align*}
|X_1\cup X_2|&=2(2^m-1)\\
\text{and}~ |X|=|T| \setminus |X_1\cup X_2|&= (2^m-1)(2^m-2)-2(2^m-1)=(2^m-1)(2^m-4).
\end{align*}
Thus $|S_3''|=(2^m-1)^2(2^m-2)(2^m-4)$.

\textbf{Case III.} Consider the set $S_3'''$. Then $d_1 \neq d_2, d_2 \neq d_3, d_1=d_3$.

Proving similarly as Case II of Lemma \ref{3rd set}, we will get $|S_3'''|=(2^m-1)^2(2^m-2)(2^m-4)$. 

\textbf{Case IV.} Consider the set $S_3''''$. Then $d_1 \neq d_2, d_2 \neq d_3, d_1 \neq d_3$.

First fix an arbitrary triple $(a_{11},a_{22},a_{33}) \in (\mathbb{F}_{2^m}^*)^3$ with $a_{11}=a_{22}$.
Let define the following sets:
\begin{align*}
T &=\{(d_1,d_2,d_3) \neq (0,0,0): d_i \neq d_j , 1 \leq i<j \leq 3\},\\
\text{and}~~ X&=\{(d_1,d_2,d_3) \in T: a_{11}d_1 + a_{22}d_2= a_{11}(d_1 + d_2)\neq 0, \\
&~~~~ a_{11}d_1 + a_{33}d_3 \neq 0, a_{22}d_2 + a_{33}d_3 \neq 0, a_{11}d_1 + a_{22}d_2+ a_{33}d_3 \neq 0\}.
\end{align*}
Clearly, $|T|= (2^m-1)(2^m-2)(2^m-3)$. To determine the cardinality of $X$, we begin with the following four subsets of $X$.  
\begin{align*}
X_1&=\{(d_1,d_2,d_3) \in T: a_{11}d_1 + a_{22}d_2 = 0\},\\
X_2&=\{(d_1,d_2,d_3) \in T: a_{11}d_1 + a_{33}d_3 = 0\},\\
X_3&=\{(d_1,d_2,d_3) \in T: a_{11}d_2 + a_{33}d_3 = 0\},\\
\text{and}~
X_4&=\{(d_1,d_2,d_3) \in T: a_{11}d_1 + a_{22}d_2+a_{33}d_3 = 0\}.
\end{align*}
Observe that $|X|=|T| \setminus|X_1\cup X_2\cup X_3\cup X_4|$.
Since $d_1 \neq d_2$ and $a_{11}=a_{22},~ a_{11}d_1 + a_{22}d_2= a_{11}(d_1 + d_2)$ is always non-zero  for all $(d_1,d_2,d_3)$. Therefore $|X_1|= 0$. 

Consider the following three sets: 
\begin{align*}
Y_2 &=\{(ya_{33},d_2,ya_{11}): y \in \mathbb{F}_{2^3}^*,d_2\neq \{0, ya_{33}, ya_{11}\}\},\\
Y_3 &=\{(d_1,za_{33},za_{11}): d_1 \in \mathbb{F}_{2^3}^*, z \in \mathbb{F}_{2^3}^*,d_1 \neq \{za_{33}, za_{11}\}\}\\
&= \{(d_1,za_{33},za_{11}): d_1 \in \mathbb{F}_{2^3}^*, z \in \mathbb{F}_{2^3}^*, z \neq \{ d_1a_{33}^{-1}, d_1a_{22}^{-1}\} \},\\
\text{and}~~Y_4 &=\{(d_1,d_2,d_3) : d_1 \in \mathbb{F}_{2^3}^*, d_2 \neq \{0, d_1, (a_{11}+a_{33})a_{22}^{-1}d_1, (a_{22}+a_{33})^{-1}a_{11}d_1\}, d_3 \neq 0,\\
&~~ d_3 = a_{33}^{-1}(a_{11}d_1 + a_{22}d_2)\}.
\end{align*}

Using the same argument as previous cases, we have $Y_2=X_2, Y_3=X_3$. Then $|X_2|=|X_3|=(2^m-1)(2^m-3)$. 

Next we prove that $Y_4=X_4$.
Let $(d_1',d_2',d_3')\in Y_4$. Then $ d_3'=a_{33}^{-1}(a_{11}d_1' + a_{22}d_2') $ implies $ a_{11}d_1' + a_{22}d_2'+a_{33}d_3'=0 $. Additionally, since $d_3' \neq 0$ it follows that $a_{11}d_1' \neq a_{22}d_2'$ i.e., $d_2' \neq a_{22}^{-1}a_{11}d_1'=d_1'$. To prove $(d_1',d_2',d_3')\in X_4$, we need to show  $d_3' \neq 
\{d_1',d_2'\}$. If we assume $d_3'= d_1'$ then $d_1' = a_{33}^{-1}(a_{11}d_1' + a_{22}d_2')$ which implies $d_2'=(a_{11}+a_{33})a_{22}^{-1}d_1'$, which is not possible. Also, if $d_3'=d_2',$ then $d_2' = a_{33}^{-1}(a_{11}d_1' + a_{22}d_2')$ implies  $d_2'=(a_{22}+a_{33})^{-1}a_{11}d_1'$, also not possible. Therefore $(d_1',d_2',d_3') \in T$ and $Y_4 \subseteq X_4$.

Conversely , let $(d_1',d_2',d_3')\in X_4$. Then $d_1' \neq d_2', d_1'\neq d_3', d_1' \neq d_2'$ and $a_{11}d_1' + a_{22}d_2' + a_{33}d_3'=0$. Since $ a_{33}d_3' \neq 0$, we have $a_{11}d_1' \neq a_{22}d_2'$. This implies $d_2' \neq a_{22}^{-1}a_{11}d_1'=d_1'$. Since $d_3'\neq \{d_1', d_2'\}$, then $a_{11}d_1' + a_{22}d_2' \neq a_{33}d_1'$ and $a_{11}d_1' + a_{22}d_2' \neq a_{33}d_2'$. This implies $d_2' \neq (a_{11}+a_{33})a_{22}^{-1}d_1'$ and $d_2' \neq (a_{22}+a_{33})^{-1}a_{11}d_1'$ respectively. Thus $(d_1',d_2',d_3')\in Y_4.$ Thus $|Y_4|=(2^m-1)(2^m-4)=|X_4|$.

Observe that $X_1 \cap X_2= X_1 \cap X_3= X_1\cap X_4 = \phi$. Now we calculate cardinality of $X_2 \cap X_3$. If $(d_1',d_2',d_3') \in X_2 \cap X_3$, then $a_{11}d_1' + a_{33}d_3' = 0$ and $a_{11}d_2' + a_{33}d_3' = 0$. Adding these two equations, we obtain, $a_{11}d_1'+a_{11}d_2'=0$ which is not possible since $d_1' \neq d_2'.$ Therefore we conclude that $X_2 \cap X_3 =\phi$. 

Similarly, for the set  $X_2 \cap X_4$, assume that $(d_1',d_2',d_3') \in X_2 \cap X_4$. Then $a_{11}d_1' + a_{33}d_3' = 0$ and $a_{11}d_1' + a_{22}d_2'+a_{33}d_3' = 0$. These two equations imply $a_{33}d_3' = 0$ which is not possible. For similar reasons, $X_3  \cap X_4$ is also empty.
Therefore, 
\begin{align*}
|X_1\cup X_2\cup X_3\cup X_4|&=|X_1|+|X_2|+|X_3|+|X_4|\\
&=2(2^m-1)(2^m-3)+(2^m-1)(2^m-4)\\
&=(2^m-1)(3\cdot2^m-10)\\
\text{and}~~|X|&=(2^m-1)(2^m-2)(2^m-3)-(2^m-1)(3\cdot2^m-10)\\
&=(2^m-1)(2^{2m}-8\cdot2^m+16).
\end{align*}
Hence cardinality of $S_3''''$ is $(2^m-1)^2(2^m-2)(2^{2m}-8\cdot2^m+16)$. Since the intersection of any two of $S_3',S_3'',S_3'''$ and $S_3''''$ are empty, cardinality of $S_3$ is
\begin{align*}
|S_3|&=2(2^m-1)^2(2^m-2)(2^m-4)+(2^m-1)^2(2^m-2)(2^{2m}-8\cdot2^m+16)\\
&=(2^m-1)^2(2^m-2)(2^{2m}-6 \cdot2^m+8).
\end{align*}
\end{proof}

Consider the other two cases similar to $S_3$ and named them as follows: 
\begin{align*}
S_4&= \{(a_{11}, a_{22}, a_{33}, d_1, d_2, d_3) \in (\mathbb{F}_{2^m}^*)^6, m \geq 2: a_{11}=a_{33}, a_{22} \neq a_{33}, a_{11}d_1 + a_{22}d_2 \neq 0,\\
&~~  a_{11}d_1 +a_{33}d_3 \neq 0, a_{22}d_2 + a_{33}d_3 \neq 0, a_{11}d_1 + a_{22}d_2 + a_{33}d_3 \neq 0\}\\
\text{and}~ S_5 &= \{(a_{11}, a_{22}, a_{33}, d_1, d_2, d_3) \in (\mathbb{F}_{2^m}^*)^6, m \geq 2: a_{22}=a_{33}, a_{11} \neq a_{33}, a_{11}d_1 + a_{22}d_2 \neq 0,\\
&~~  a_{11}d_1 +a_{33}d_3 \neq 0, a_{22}d_2 + a_{33}d_3 \neq 0, a_{11}d_1 + a_{22}d_2 + a_{33}d_3 \neq 0\}.
\end{align*}
Then, we have 
\begin{align*}
|S_3 \cup S_4 \cup S_5|=& 6(2^m-1)^2(2^m-2)(2^m-4)+3(2^m-1)^2(2^m-2)(2^{2m}-8\cdot2^m+16)\\
&=3(2^m-1)^2(2^m-2)(2^{2m}-6\cdot2^m+8)\\
&=3(2^m-1)^2(2^m-2)^2(2^m-4).
\end{align*}


\begin{theorem}\label{cardinality final}
Let $S$ be the set $S= \{(a_{11}, a_{22}, a_{33}, d_1, d_2, d_3) \in (\mathbb{F}_{2^m}^*)^6, m \geq 2: a_{11}d_1 + a_{22}d_2 \neq 0, a_{11}d_1 +a_{33}d_3 \neq 0, a_{22}d_2 + a_{33}d_3 \neq 0, a_{11}d_1 + a_{22}d_2 + a_{33}d_3 \neq 0\}$. Then cardinality of $S$ is $(2^m-1)^3(2^m-2)(2^{m}-4)$.
\end{theorem}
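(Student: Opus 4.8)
The plan is to assemble the cardinality of $S$ directly from the three lemmas already proved, exploiting the fact that the four non-vanishing conditions defining $S$ impose no relation among the diagonal entries $a_{11},a_{22},a_{33}$ themselves. This lets me stratify $S$ according to the pattern of equalities among these three entries, a pattern on which Lemmas \ref{1st set}, \ref{2nd set}, and \ref{3rd set} are precisely indexed.

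First I would observe that for any triple $(a_{11},a_{22},a_{33})\in(\mathbb{F}_{2^m}^*)^3$ exactly one of three mutually exclusive and exhaustive possibilities holds: the three entries are pairwise distinct; all three are equal; or exactly one pair coincides. Since each of the conditions $a_{11}d_1+a_{22}d_2\neq 0$, $a_{11}d_1+a_{33}d_3\neq 0$, $a_{22}d_2+a_{33}d_3\neq 0$, and $a_{11}d_1+a_{22}d_2+a_{33}d_3\neq 0$ is a statement about the coordinates only, the membership conditions of $S$ are orthogonal to this trichotomy. Consequently $S$ decomposes as the disjoint union $S=S_1\sqcup S_2\sqcup(S_3\cup S_4\cup S_5)$, where $S_1$ collects the all-distinct tuples, $S_2$ the all-equal tuples, and $S_3,S_4,S_5$ the three single-coincidence classes ($a_{11}=a_{22}$, $a_{11}=a_{33}$, $a_{22}=a_{33}$ respectively). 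Disjointness is immediate, as a fixed equality pattern on $(a_{11},a_{22},a_{33})$ cannot lie in two of the classes at once.

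Next I would substitute the values from Lemma \ref{1st set}, Lemma \ref{2nd set}, and the count of $|S_3\cup S_4\cup S_5|$ displayed immediately before the theorem. Writing $q=2^m$ and pulling out the common factor $(q-1)^2(q-2)$, the total becomes
\begin{align*}
|S| = (q-1)^2(q-2)\Bigl[(q^3-9q^2+26q-24)+(q-4)+3(q^2-6q+8)\Bigr].
\end{align*}
Collecting terms inside the bracket gives $q^3-6q^2+9q-4$, so that $|S|=(2^m-1)^2(2^m-2)(2^{3m}-6\cdot 2^{2m}+9\cdot 2^m-4)$, as claimed. As a consistency check one may note $q^3-6q^2+9q-4=(q-1)^2(q-4)$, whence $|S|=(2^m-1)^4(2^m-2)(2^m-4)$.

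There is no genuine analytic obstacle remaining at this stage, since the difficult casework was already carried out in the lemmas; the theorem is an assembly step. The only point demanding care is the first one, namely verifying that the equality pattern on the diagonal entries genuinely partitions $S$ and that the hypotheses of the three lemmas align with these classes exactly, so that no tuple is double-counted or omitted. The subsequent collapse of the three polynomials into a single cubic is purely routine arithmetic.
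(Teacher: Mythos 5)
Your proposal is correct and follows essentially the same route as the paper: both decompose $S$ as the disjoint union of $S_1$, $S_2$, and $S_3\cup S_4\cup S_5$ according to the equality pattern of $(a_{11},a_{22},a_{33})$ and then sum the cardinalities from Lemmas \ref{1st set}, \ref{2nd set}, and \ref{3rd set}. Your explicit verification that the bracketed polynomials collapse to $q^3-6q^2+9q-4=(q-1)^2(q-4)$ is a welcome addition the paper leaves implicit.
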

\begin{proof}
Note that, the set $S$ is the disjoint union of $S_1,S_2,S_3,S_4$ and $S_5$ i.e.,  $S_i \cap S_j= \phi$ for all $1 \leq i <j \leq 5$. Therefore, from lemma \ref{1st set}, \ref{2nd set} and \ref{3rd set}, we have $|S|=|S_1|+|S_2|+|S_3|+|S_4|+|S_5|=(2^m-1)^3(2^m-2)(2^{m}-4)$.
\end{proof}


\begin{theorem}\label{counting result}
The number of $3 \times 3$ semi-involutory MDS matrix over the finite field $\mathbb{F}_{2^m}, m \geq 2$ is $ (2^m-1)^5(2^m-2)(2^{m}-4)$.
\end{theorem}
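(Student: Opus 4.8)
The plan is to count semi-involutory MDS matrices by relating them to the parameter set $S$ whose cardinality is computed in Theorem \ref{cardinality final}. By Theorem \ref{SI MDS}, a matrix $A$ of the form in equation (\ref{matrix1}) is a semi-involutory MDS matrix if and only if the four sums $a_{11}d_1+a_{22}d_2$, $a_{11}d_1+a_{33}d_3$, $a_{22}d_2+a_{33}d_3$ and $a_{11}d_1+a_{22}d_2+a_{33}d_3$ are all non-zero. These are precisely the defining conditions of the set $S$, so every $6$-tuple $(a_{11},a_{22},a_{33},d_1,d_2,d_3)\in S$ together with a choice of the free parameters $x,y\in\mathbb{F}_{2^m}^*$ produces such a matrix. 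Since $x$ and $y$ each range over $\mathbb{F}_{2^m}^*$ independently, there are $(2^m-1)^2$ choices of the pair $(x,y)$ for each element of $S$.

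First I would argue that this construction is exhaustive: by Theorem \ref{gen 3x3 inv}, every $3\times3$ irreducible semi-involutory matrix (equivalently, by the remark that MDS matrices are irreducible, every semi-involutory MDS matrix) can be written in the form (\ref{matrix1}) in terms of its diagonal entries $a_{11},a_{22},a_{33}$, the associated diagonal matrix entries $d_1,d_2,d_3$, and the parameters $x,y$. Thus the map sending a data point $(a_{11},a_{22},a_{33},d_1,d_2,d_3;x,y)\in S\times(\mathbb{F}_{2^m}^*)^2$ to the matrix $A$ in (\ref{matrix1}) is surjective onto the set of semi-involutory MDS matrices. The second step is to verify that this map is injective: the diagonal entries of $A$ recover $a_{11},a_{22},a_{33}$ directly, and once these are fixed, I would show that the off-diagonal entries determine $d_1,d_2,d_3,x,y$ uniquely. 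Combining the count $|S|=(2^m-1)^2(2^m-2)(2^{3m}-6\cdot2^{2m}+9\cdot2^m-4)$ from Theorem \ref{cardinality final} with the factor $(2^m-1)^2$ from the free choices of $x,y$ then yields the claimed total $(2^m-1)^4(2^m-2)(2^{3m}-6\cdot2^{2m}+9\cdot2^m-4)$.

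The main obstacle I anticipate is the injectivity argument, since the same matrix $A$ might conceivably arise from different parameter tuples. Here the associated diagonal matrix $D$ is not, a priori, uniquely determined by $A$ alone, so some care is required: I would need to argue that the representation is rigid given that the diagonal of $A$ fixes the $a_{ii}$'s and the structural equations (\ref{eq 11})--(\ref{eq 13}) pin down the products $(a_{ii}d_i+a_{jj}d_j)$, after which the scaling by $x,y$ is forced by comparing the symmetric pairs of off-diagonal entries. An alternative and perhaps cleaner route would be to observe that since the scalar $a=(a_{11}d_1+a_{22}d_2+a_{33}d_3)^2$ and the diagonal entries are intrinsic to $A$, the data $(d_1,d_2,d_3)$ is determined up to the global constant $c$ appearing in Theorem \ref{diag mat cond}, and then account for this by restricting to a normalized representative; however, the parametrization in (\ref{matrix1}) already absorbs such scaling, so I expect the direct count via $|S|\times(2^m-1)^2$ to be correct once injectivity is confirmed.
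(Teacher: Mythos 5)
Your outline is the same argument the paper gives: parametrize via equation (\ref{matrix1}), count the admissible $6$-tuples by Theorem \ref{cardinality final}, and multiply by $(2^m-1)^2$ for the free choices of $x,y$. Your instinct to single out injectivity of the map $S\times(\mathbb{F}_{2^m}^*)^2\to\{\text{matrices}\}$ as the main obstacle is exactly right — the paper's proof never addresses it — but the step you defer in fact fails. Every entry of the matrix in (\ref{matrix1}) has the form $(a_{ii}d_i+a_{jj}d_j)\,d_k^{-1}\cdot(\text{monomial in }x,y)$, so replacing $(d_1,d_2,d_3)$ by $(\lambda d_1,\lambda d_2,\lambda d_3)$ for any $\lambda\in\mathbb{F}_{2^m}^*$ leaves $A$ entirely unchanged, and the four defining conditions of $S$ are likewise invariant under this scaling. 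Hence each semi-involutory MDS matrix is hit by at least $2^m-1$ distinct tuples, and the product $|S|\cdot(2^m-1)^2$ overcounts by at least that factor. Your parenthetical remark that the parametrization ``already absorbs such scaling'' is the observation that should have set off the alarm: absorbing the scaling is precisely what makes the map non-injective when the $d_i$ range over all of $(\mathbb{F}_{2^m}^*)^3$ without normalization. Your fallback — treating $a=(a_{11}d_1+a_{22}d_2+a_{33}d_3)^2$ as intrinsic to $A$ to pin down $D$ — also does not work, since $a$ scales by $\lambda^2$ with $D$ and is therefore not an invariant of $A$ alone (consistent with Theorem \ref{diag mat cond}, which only determines $D$ up to a nonzero scalar).

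To complete an argument along these lines you would need the exact fiber size of the parametrization (at least the $(2^m-1)$-fold scaling redundancy, plus a verification that no further coincidences occur) and then divide, or equivalently normalize the parametrization, e.g.\ by fixing $d_1=1$. A heuristic sanity check suggests the overcount is real rather than an artifact of the bookkeeping: over $\mathbb{F}_{2^3}$ the stated formula yields roughly $2.8\times 10^6$ matrices, whereas the two polynomial identities of Theorem \ref{3x3 case} that any $3\times3$ semi-involutory matrix must satisfy already cut the $(2^3-1)^9$ all-nonzero matrices down to about $7^9/8^2\approx 6.3\times 10^5$ before the MDS minors are even imposed. So the gap you flagged is genuine, it is shared by the paper's own proof, and resolving it changes the final count rather than merely tightening the exposition.
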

\begin{proof}
An MDS semi-involutory matrix is expressed in general form given by the equation (\ref{matrix1}) using only diagonal entries of the matrix and the entries of an associated diagonal matrix. Let $a_{11}, a_{22}, a_{33}, d_1, d_2, d_3$ are those entries and $x,y$ are arbitrary. Since $x, y \in \mathbb{F}_{2^m}^*$, then the number of choices for $x$ and $y$ is $(2^m-1)^2$. Using Theorem \ref{SI MDS} and Theorem \ref{cardinality final}, the number of choices for $a_{11}, a_{22}, a_{33}, d_1, d_2$ and $d_3$ are $(2^m-1)^3(2^m-2)(2^{m}-4)$. Therefore total number of MDS semi-involutory matrix is  $ (2^m-1)^5(2^m-2)(2^{m}-4)$.
\end{proof}

\begin{remark}
In \cite{GSARC} it was proved that the number of $3 \times 3$ involutory MDS matrices over $\mathbb{F}_{2^m}, m >2$ is $(2^m-1)^2(2^m-2)(2^m-4)$. Therefore, there exists total $1176$ and $37800,~ 3 \times 3$ involutory MDS matrices over $\mathbb{F}_{2^3}$ and $\mathbb{F}_{2^4}$ respectively. However, Theorem \ref{counting result} states that there does not exist any $3 \times 3$ semi-involutory MDS matrix over $\mathbb{F}_{2^2}$ and the number of $3 \times 3$ semi-involutory MDS matrices over $\mathbb{F}_{2^3}$ is $403368$, and over $\mathbb{F}_{2^4}$ is $127575000$.  
\end{remark}

\section{Conclusion}\label{conclusion}
We prove that it is possible to construct all $3\times 3$ semi-involutory MDS matrices over the finite field $\mathbb{F}_{2^m}$ by using only three diagonal elements and the entries of an associated diagonal matrix. Our proposed matrix form in equation (\ref{matrix1}) with MDS property is beneficial to use in the diffusion layer of SPN based block cipher because the inverse matrix is a simple matrix multiplication of two diagonal matrices with the original matrix. Additionally, we provide the count of the total number of $3 \times 3$ MDS, semi-involutory  matrix over the finite field $\mathbb{F}_{2^m}$.

The general structures for the involutory and semi-involutory  matrices of even sizes
or powers of $2$ is still an open problem. Also it will be worth exploring when those matrices have MDS properties.

%

\section{Declaration}
\textbf{Competing interest:} The authors declare that they have no known competing financial interests or personal
relationships that could have appeared to influence the work reported in this paper.

\bibliographystyle{plain}

\end{document}